\documentclass[]{svjour3}
\usepackage{graphicx}
\usepackage{epstopdf}
\usepackage{amsmath,amssymb,amsfonts}
\usepackage{mathtools}
\usepackage{multirow}
\usepackage{verbatim}
\usepackage{url}
\usepackage{comment}
\usepackage{mathtools}
\usepackage{epsf,pgf,graphicx}
\usepackage{color}
\usepackage{tikz,pgf}
\usepackage{diagbox}
\usepackage{makecell}
\usepackage{color}
\usepackage{braket}
\usepackage{mathtools}
\usepackage{xcolor}
\usepackage{subfig}
\usepackage{hyperref}
\usepackage{rotating}

\setcounter{MaxMatrixCols}{20}

\usepackage{tabstackengine}
\stackMath

\usepackage{array}
\newcolumntype{M}[1]{>{\hbox to #1\bgroup\hss$}l<{$\egroup}}

\makeatletter
\newcommand\@brcolwidth{0.67em}

\def\@brarray[#1]{\array{r*\c@MaxMatrixCols {M{#1}}}}
\makeatother

\title{Analysis of Boolean Functions Related to Binary Input Binary Output Two-party Nonlocal Games}
\author{Jyotirmoy Basak \and Subhamoy Maitra \and Prabal Paul \and Animesh Roy}
\institute{
	J. Basak \at
	Applied Statistics Unit, Indian Statistical Institute, Kolkata, India.\\
	\email{bjyotirmoy.93@gmail.com}
	\and
	S. Maitra \at
	Applied Statistics Unit, Indian Statistical Institute, Kolkata, India.\\
	\email{subho@isical.ac.in}
	\and
	P. Paul \at
	Department of Mathematics, BITS-Pilani K K Birla Goa Campus, India.\\
	\email{prabal.paul@gmail.com}
	\and
	A. Roy \at
	Applied Statistics Unit, Indian Statistical Institute, Kolkata, India.\\
	\email{animesh.roy03@gmail.com}
}
\begin{document}
\maketitle
\begin{abstract}
	The famous CHSH game can be interpreted with Boolean functions while understanding the success probability in the classical scenario. In this paper, we have exhaustively studied all the Boolean functions on four variables to express binary input binary output two-party nonlocal games and explore their performance in both classical and quantum scenarios. Our analysis finds out some other games (other than the CHSH game) which offer a higher success probability in the quantum scenario as compared to the classical one. Naturally, our study also notes that the CHSH game (and the games corresponding to the similar partition) is the most efficient in terms of separation between quantum and classical techniques.
\end{abstract}

\keywords{Nonlocal Game, Partition, Quantum Advantage, Device Independence}



\section{Introduction}\label{sec1}

Quantum computation is strikingly more powerful than classical computation and this is evident from the evaluation of quantum algorithms which can be exponentially faster~\cite{SH94,SM94} as compared to the conventional classical algorithms. Because of this potency of quantum computation, quantum cryptography~\cite{BB84} offers additional security that is impossible to replicate in the classical world. This kind of quantum advantage can also be achieved in the case of nonlocal games.

Nonlocal games refer to the games played between multiple space-separated players and a referee where communication between the players is strictly forbidden during the game. In a binary input binary output two-party nonlocal game, the referee sends an input bit to each of the players who then respond by sending output bits to the referee. Based on the winning condition, the players fix some strategies among themselves in the classical scenario before the game begins. Similarly in the quantum scenario, the players share some entanglement among themselves before the start of the game to get some advantage in the winning probability as compared to the classical scenario.

In a binary input binary output two-party nonlocal game, each player has two choices for the input and two choices for the output. The most well known binary input binary output two-party nonlocal game is the CHSH game~\cite{CHSH69}, where a referee provides two uniformly random bits $x_1$ and $x_2$ to each of the two players. After receiving the inputs, the two parties send their output bits $x_3$ and $x_4$ to the referee. The function that represents the CHSH game is of the form $f(x_1, x_2, x_3, x_4) = (x_1 \wedge x_2) \oplus (x_3 \oplus x_4)$. From the winning condition of the CHSH game, one can easily check that the two parties can win the game whenever the values of $x_1, x_2, x_3, x_4$ satisfy $f(x_1, x_2, x_3, x_4) = 0$. It is well known that the maximum success probability of the CHSH game in the classical scenario is $0.75$ whereas the maximum success probability using quantum resources is $\cos^2\frac{\pi}{8}$ (which is approximately $0.85$).

There are several known two-party nonlocal games that offer quantum advantages~\cite{BBT05,CM14}. However, the inputs and the outputs for any of those games are not restricted to bits. To the best of our knowledge, the CHSH game is the only known binary input binary output two-party nonlocal game that offers a quantum advantage.

The nonlocal games are interesting because for some of those games, the quantum advantage or a separation (an advantage to the maximum quantum success probability as compared to the maximum classical one) can be achieved which is often useful to prove the quantumness of a system and to certify the untrusted devices involved in a scheme in Device Independent (DI) scenario. In general, the DI certification of quanutm cryptographic schemes have been done~\cite{BCMM22,MPR17,VV14} considering the CHSH game. Recently, DI certification has also been done in Quantum Key Distribution scenario considering the three party pseudo-telepathy game~\cite{BMM19}.


Although there are several known nonlocal games that offer quantum advantage~\cite{BBT05,BBT03}, to the best of our knowledge, from the class of all possible binary input binary output two-party nonlocal games, the only known game that offers quantum advantage is the CHSH game. As the CHSH game can't be won with certainty in the quantum scenario, it would be interesting to check whether there exists any other binary input binary output two-party nonlocal game for which quantum advantage can be achieved and the game can be won with a better quantum success probability than the CHSH game (because from the analysis of~\cite{JS18}, it is clear that if there exists any such game then it can be used for DI testing instead of the CHSH game to reduce the overall sample size).  

In this article, we have explored the performance of
all possible binary input binary output two-party nonlocal games
(having atleast one successful outcome for each possible input)
by considering them as four variable boolean functions.

\subsection{Contribution and Organization}

In the current article, we analyze the performance of all those binary input binary output two-party nonlocal games (in both classical and quantum scenarios) which have atleast one successful outcome for every possible input. In Section~\ref{preli}, we begin with some preliminary discussions about our notational assumptions and introduces some definitions related to our analysis. In Subsection~\ref{incos}, we briefly describe different groups of strategies for our classical analysis and mention the structure of games related to the $2+2$ partition which can't be won with certainty. In the next Subsection (i.e., in Subsection~\ref{bascresult}), we derive some basic results that will be required for the performance analysis of different partitions of games. Next we mention the strategies to find out the maximum classical and maximum quantum success probabilities for the games corresponding to each partition. In Section~\ref{error}, we discuss the minor errors in the count values of boolean functions mentioned in table 1 of \cite{NGA19}. Finally in
Section~\ref{main}, we present the detailed analysis (both in classical and quantum scenarios) for the games corresponding to each of the partitions. Our main contributions in this paper are twofold which is enumerated below.

\begin{enumerate}
	
	\item The CHSH game is the most well-known game from the class of all possible binary input binary output two-party nonlocal games that offer quantum advantage. In this direction, here we have considered all possible binary input binary output two-party nonlocal games which have at least one successful outcome for every possible input, group them in terms of partitions of the number of successful outcomes and analyze their performance to identify whether there exist any such different game which offer quantum advantage.
	
	\item The CHSH game is also used for DI certification. In~\cite{JS18}, it is mentioned that the number of samples
	required for DI testing is inversely proportional to the success probability of the underlying nonlocal
	game and the maximum success probability of the CHSH game in the quantum scenario is around
	$0.85$. To reduce the overall sample size, we have
	explored the performance of all other games to check whether
	there exists any other game for which quantum advantage can
	be achieved.
	
\end{enumerate}

We conclude the paper in Section~\ref{concl} with directions for future research. Before proceeding further, let us first define our notational assumptions and a few definitions that are required for our analysis.

\section{Preliminaries}
\label{preli}
Every two-party nonlocal game with the input bits (say) $x_1$ and $x_2$ and the output bits (say) $x_3$ and $x_4$ can be represented as a $4$-variable boolean function (with variables $x_1, x_2, x_3$ and $x_4$). 

In the classical scenario of a nonlocal game, the players fix some strategies among themselves before the game begins. A strategy for a player may be either input-dependent or input-independent. One can easily check that for every input, a player can have exactly two input-dependent strategies (i.e., either the input bit itself or the complement of the input bit) and exactly two input-independent strategies (i.e., either output $0$ or output $1$ irrespective of the input bits). So, it is obvious that the two players can have a maximum of $16$ different strategies in the classical scenario. It is also evident that for a particular value of the input pair $x_1x_2$, there may have atmost four possible values of the output pair $x_3x_4$. The players may not win the game for all four possible values of the output pair $x_3x_4$. Without loss of generality, here we assume that for a particular assignment to the values of $x_1,x_2,x_3,x_4$, if one can win the game then the corresponding output of the boolean function is $0$, otherwise the output is $1$.

Based on the distribution of the successful outcomes (i.e., the distribution of $0'$s) in the output column of the boolean function, a binary input binary output two-party nonlocal game can be represented in terms of \emph{partitions} of the total number of successful outcomes.

\definition{
	\label{defpart}
	\indent
	{\it (Partition of a nonlocal game):~A partition is a representation of a class of $n$ party nonlocal games depending on the total number of successful outcomes. A partition of a nonlocal game is generated by splitting up the total number of successful outcomes into $2^n$ parts depending on the number of successful outcomes for each of the $2^n$ possible inputs. For an $n$-party binary input binary output nonlocal game with $d$ number of successful outcomes (where $2^n \leq d \leq 2^{2n}$), the corresponding partition will be represented as a summation of $2^n$ non-zero numbers (like $n_1 + n_2 + \cdots+ n_{2^n}$) such that $d=\sum_{i=1}^{2^n} n_i$ where each $n_i$ is the number of successful outcomes for the $i$-th input such that $0< n_i \leq 2^n$.} }\\

For a binary input binary output two-party nonlocal game, there are four possible inputs and for every input, there can have atmost four possible successful outcomes. So for these games, the partition representation is of the form $p_1+p_2+p_3+p_4$ where each $p_i$ denotes the total number of successful outcomes for the $i$-th input such that $0 \leq p_i \leq 4$. For example, one may consider the CHSH game (which represents a balanced $4$-variable boolean function) for which the partition representation is of the form $2+2+2+2$. Similarly every other binary input binary output two-party nonlocal games can be represented as a summation of four non-zero numbers. 

From these discussions, one can easily understand that many different games have the same representation of the partition. However, all the games that belong to a particular partition may not behave similarly. Here in this present effort, we are interested in discovering all those games which offer quantum advantage (i.e., a better winning probability in the quantum scenario as compared to the classical one).

\definition{
	\label{defsep}
	{\it (Separation for a nonlocal game):~A separation denotes the difference between the maximum classical and the maximum quantum success probabilities for those games which offer a quantum advantage.}}\\

For the sake of simplicity, from now onwards we use the notation $x$ and $y$ to denote input bits and the notation $a$ and $b$ to denote the output bits of the two parties. $\overline{x}, \overline{y}, \overline{a}, \overline{b}$ denotes the usual complements (bit complement) of $x, y, a, b$ respectively. Later on, if nothing is specified explicitly, whenever we use $xy$ as input and $ab$ as output for the two players, we assume that $xy$ and $ab$ can take any values from the set $\{00, 01, 10, 11\}$.

\subsection{Inconsistency for the $2+2$ Partition and Its Subpartitions} 
\label{incos}

It is well known that for a binary input binary output two-party nonlocal game, there are $4$ possible inputs and for each input, there can have atmost $4$ different outputs. It is also clear that for a particular input string (i.e., for a particular value of $xy$), the two players can have atmost $16$ different strategies to generate their outcomes in the classical scenario. Based on the outcomes, here we classify the $16$ different strategies into four groups where each group has $4$ different strategies and each of these strategies leads to a different outcome for a particular input. These four groups are as follows.\\

\textbf{Group 1 (Constant Strategies):} $00, 01, 10, 11$

\textbf{Group 2 (Input-dependent Strategies):} $xy, \overline{x}y, x\overline{y}, \overline{x}\overline{y}$

\textbf{Group 3 (Mixed Strategies):} $x0, x1, \overline{x}0, \overline{x}1$

\textbf{Group 4 (Mixed Strategies):} $0y, 1y, 0\overline{y}, 1\overline{y}$\\

Whenever two different inputs are chosen, there are two possibilities for their values. Either the inputs are complement to each other (i.e., of the form $xy, \overline{xy}$) or they are not complement to each other (i.e., of the form $xy, \overline{x}y$ or $xy, x\overline{y}$). 

Now if a strategy is applied to these chosen inputs, the generated output pair may match in all two positions or only in one position or none of the positions. One can easily explore that for a complement input pair, if the outputs are same then the corresponding strategy must be constant. Similarly if the outputs are complement to each other (i.e., of the form $ab, \overline{ab}$) for a complement input pair, the corresponding strategy must be an input-dependent strategy and if the outputs have only one different bit (i.e., of the form $ab, \overline{a}b$ or $ab, a\overline{b}$) then the corresponding strategy must be a mixed strategy (either from group $3$ or from group $4$). In this similar way, one can also explore the strategies for the cases where the inputs are not complement to each other. For complement input pair and input pair with one bit difference, the different strategies and corresponding outputs are demonstrated in Table \ref{ipopstr}.

\begin{table}[htb]
	\begin{center} 
		\begin{tabular}{|c|c|c|} 
			\hline
			Input & Strategy & Corresponding Output\\
			\hline
			\multirow{3}{*}{Complement input} & Constant & Constant\\
			& Input dependent & Complement output\\
			& Mixed & One bit difference in two outputs\\
			\hline
			Input pair & Constant & Constant\\
			 with one & Input dependent & One bit difference in two outputs\\
			bit difference & Mixed & Same or one bit difference in two outputs\\
			\hline
		\end{tabular}
	\end{center}
	\caption{The strategies and corresponding outputs for different inputs}
	\label{ipopstr}
\end{table}

It is interesting that whenever two different inputs match in exactly one bit position (i.e., inputs of the form $xy, \overline{x}y$ or $xy, x\overline{y}$) but the output bits in that position are different for different inputs then one can't get any strategy that satisfies atleast one output for both the inputs. More formally, whenever the inputs and the corresponding outputs are of the form mentioned in Table \ref{inconsis}, one can't get any strategy that satisfies atleast one output for both the inputs.

\begin{table}[htbp]
	\begin{center}
		\begin{tabular}{ |c|c| } 
			\hline
			Input & Corresponding output\\
			\hline
			\hline
			$xy$ & $ab, a\overline{b}$\\
			\hline
			$x\overline{y}$ & $\overline{a}b, \overline{ab}$\\
			\hline
		\end{tabular}
	\caption{General form of inconsistent outputs for $2+2$ partition}
	\label{inconsis}
	\end{center}
\end{table}


This leads us to the following result.

\begin{theorem}
		\label{thm1}
	{\it For the two input-output pairs of a game, if one bit of the input pair remains the same and the corresponding bit of their outputs is different, then no strategy satisfies atleast one output for both the inputs.}
	
\end{theorem}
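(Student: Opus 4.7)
The plan is to reduce the statement to the basic locality property of classical strategies and then invoke it directly. By the setup in Section~\ref{preli}, a classical joint strategy for the two players is a pair of local functions, one for each player, where each player's output bit depends only on that player's own input bit. So Alice's output bit is determined entirely by $x$ and Bob's by $y$. This is the only ingredient I will need.

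Without loss of generality (the two cases being symmetric under swapping the roles of the two players), assume the first bit of the two inputs is the common one, so the inputs are $xy$ and $x\overline{y}$ as in Table~\ref{inconsis}. The hypothesis then says that the first bit of Alice's required output differs between the two inputs: the allowed outputs for $xy$ have first bit $a$, while the allowed outputs for $x\overline{y}$ have first bit $\overline{a}$. Since Alice receives the same input $x$ in both scenarios, any local strategy forces her to emit the same output bit in both scenarios. Therefore the first output bit cannot be $a$ in one case and $\overline{a}$ in the other, and no strategy can simultaneously satisfy even one allowed output for each input.

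As a sanity check (and to make the argument concrete in the paper's own language) I would verify the claim by traversing the four-group classification of $16$ strategies listed just before the theorem. In Group~1 the output is constant, so the first bit is constant; in Group~2 the first output bit is $x$ or $\overline{x}$; in Group~3 the first output bit is again $x$ or $\overline{x}$; in Group~4 the first output bit is a constant $0$ or $1$. In every case the first output bit is independent of $y$, hence identical for the inputs $xy$ and $x\overline{y}$, and so cannot hit both $a$ and $\overline{a}$. This exhausts all $16$ classical strategies.

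There is no real obstacle here; the theorem is essentially a restatement of the no-signalling character of local strategies, and the only care needed is to make the reduction ``WLOG the matching input bit is the first one'' explicit and to note that the symmetric case (matching second input bit, differing second output bit) follows by exchanging the roles of the two players. I would present the proof as the three-line locality argument above, and mention the group-by-group verification as an alternative to connect with Subsection~\ref{incos}.
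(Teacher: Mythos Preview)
Your proposal is correct and follows essentially the same approach as the paper: both arguments boil down to the observation that the first output bit is a function of $x$ alone and hence identical for the inputs $xy$ and $x\overline{y}$, so it cannot match $a$ for one input and $\overline{a}$ for the other. The paper presents this solely via the group-by-group case analysis, whereas you lead with the clean locality argument and add the group verification as a sanity check; the content is the same.
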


\begin{proof}
	: Without loss of generality, here we assume that the input pair is of the form $xy, x\overline{y}$ and the corresponding outputs are of the form $ab$ and $\overline{a}b$ respectively (i.e., the first bit for both the inputs are same however the first bit for the two outputs are different). 
	
	As the two outputs are different, no constant strategy can satisfy both outputs for this input pair. One can also check that whenever a mixed strategy either from group $3$ or from group $4$ is applied to this specified input pair, the first bit of the corresponding outputs always remains the same. However for the given outputs (as specified in Table \ref{inconsis}), the first bits of the outputs for the two different inputs are complement to each other. This implies that no constant or mixed strategy can satisfy atleast one output for both inputs. 
	
	Similarly one can also explore that whenever an input-dependent strategy is applied to this specified input pair, the corresponding outputs are of the form $ab, a\overline{b}$ or $\overline{a}b, \overline{ab}$. This implies that no strategy from any of the groups can satisfy atleast one output for both the inputs. Similarly one can also argue for the other possible input-output pairs of this form. 
\end{proof}
\noindent

From this result, it is clear that if a game has two inputs of the form $xy, x\overline{y} (xy, \overline{x}y)$ and the corresponding outputs are of the form $ab, a\overline{b} (ab, \overline{a}b)$ and $\overline{a}b, \overline{ab} (a\overline{b}, \overline{ab})$ respectively then, there exist no strategy which satisfies atleast one output for both the inputs $xy, x\overline{y} (xy, \overline{x}y)$.

\subsection{Some Basic Results}
\label{bascresult}

In this section, we derive some basic results which are necessary throughout our discussion. It is clear from the group of strategies that for a particular input, the four different strategies of a particular group provide four different outputs. However the two different strategies from two different groups may collide, i. e., may generate the same output for a particular input. For example, the mixed strategy $x0$ (belongs to group 3) and the dependent strategy $xy$ (belongs to group 2) both generate the output $10$ for the input $10$. But the two strategies which provide the same output for a particular input may not provide the same output for any other inputs. For example, the constant strategy $00$ and the dependent strategy $xy$ always provide the same output (i.e., the output $00$) for the input $00$ but these two strategies always provide two different outputs for all the other inputs. Some interesting results (which are required for further analysis) related to these strategies and the groups are mentioned below.\\

\begin{theorem}
	\label{thm3}
	{\it If two different strategies either one from the constant group and the other from dependent group or one from the first mixed group (i.e., group 3) and the other from the second mixed group (i.e., group 4) provide same output for a particular input then, these two strategies must provide two different outputs for all the other inputs.}
\end{theorem}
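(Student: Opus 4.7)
The plan is to exploit the coordinate-wise structure of the four strategy groups. Writing any strategy as a pair $(s^{(1)}(x,y), s^{(2)}(x,y))$, the four groups are characterized as follows: constant strategies are $(c_1, c_2)$ with $c_1,c_2$ independent of input; dependent strategies are $(f_1(x), f_2(y))$ with $f_1 \in \{x, \overline{x}\}$ and $f_2 \in \{y, \overline{y}\}$; Group~3 strategies are $(g_1(x), g_2)$ with $g_1 \in \{x,\overline{x}\}$ and $g_2 \in \{0,1\}$; Group~4 strategies are $(h_1, h_2(y))$ with $h_1 \in \{0,1\}$ and $h_2 \in \{y,\overline{y}\}$. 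The idea is that in both cases covered by the theorem, the two strategies depend on $x$ in ``complementary'' ways in the first coordinate and on $y$ in ``complementary'' ways in the second coordinate, so flipping either input bit must break the agreement in at least one coordinate.

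For the first case (one constant strategy $(c_1,c_2)$ and one dependent strategy $(f_1(x), f_2(y))$), I would suppose they agree on input $(x_0,y_0)$, so $c_1 = f_1(x_0)$ and $c_2 = f_2(y_0)$. Pick any other input $(x_1,y_1)$. Since $f_1$ is either $x$ or $\overline{x}$, whenever $x_1 \neq x_0$ we have $f_1(x_1) \neq f_1(x_0) = c_1$, i.e.\ the first output bits disagree; analogously, if $y_1 \neq y_0$ then $f_2(y_1) \neq c_2$ and the second bits disagree. As $(x_1,y_1) \neq (x_0,y_0)$ forces at least one coordinate to flip, the two strategies produce different outputs on every other input.

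For the second case (Group~3 strategy $(g_1(x), g_2)$ and Group~4 strategy $(h_1, h_2(y))$), assuming agreement on $(x_0,y_0)$ gives $g_1(x_0) = h_1$ and $g_2 = h_2(y_0)$. On a different input $(x_1,y_1)$, the first bits are $g_1(x_1)$ versus $h_1 = g_1(x_0)$, which are equal iff $x_1 = x_0$; the second bits are $g_2 = h_2(y_0)$ versus $h_2(y_1)$, which are equal iff $y_1 = y_0$. Simultaneous equality of both bits would force $(x_1,y_1) = (x_0,y_0)$, contradicting the assumption, so the two outputs must differ in at least one coordinate.

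I do not expect a substantive obstacle here; the argument is essentially mechanical once one records the coordinate decompositions of the four groups. The only mild bookkeeping issue is making sure to phrase the agreement condition in a form that covers all sub-choices of $f_1,f_2,g_1,h_2 \in \{\mathrm{id},\,\mathrm{complement}\}$ uniformly, which is why I would prefer the functional notation $f_1(x_0)$ etc.\ over enumerating the sixteen specific pairs of strategies.
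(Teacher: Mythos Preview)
Your proposal is correct and follows essentially the same approach as the paper: both proofs split into the two cases (constant/dependent versus Group~3/Group~4) and argue that changing the input must break agreement in at least one output bit. The paper's version is more explicit---for Case~2 it fixes one representative pair of strategies and checks each of the three other inputs by hand, then appeals to symmetry---whereas your functional notation $(f_1(x),f_2(y))$, $(g_1(x),g_2)$, $(h_1,h_2(y))$ handles all sub-choices uniformly; but the underlying mechanism is identical.
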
 

\begin{proof}
	:
	Here the two different strategies are either from constant and dependent groups or from the two mixed groups.
	
	\textbf{Case 1:} For every input, there exists a constant and a dependent strategy that provides the same output. Whenever the input changes, the constant strategy always provides the same output as before. However the dependent strategy provides different output than the previous one as the output of a dependent strategy always depends on the inputs and provides different outcomes for different inputs.
	
	\textbf{Case 2:} For every input, there exists a strategy from the first mixed group and another strategy from the second mixed group which provides the same output. In a mixed strategy, one bit of the output is constant and the other bit of the output is input-dependent. So for the first mixed group, there are two types of strategies, namely, $xc_1$ and $\overline{x}c_1$ and for the second mixed group, there are two types of strategies, namely, $c_2y$ and $c_2\overline{y}$ where $c_1$ and $c_2$ denote the constant bits and $x$ and $y$ denote the corresponding dependent bits. This implies that there can be four different choices for the pair of strategies that provide the same output. 
	
	Let us first consider the case where $xc_1$ and $c_2y$ are the two strategies which provide same output for the input $xy$. Then the corresponding outputs are,
	\begin{align*}
	xy &\rightarrow xc_1 ~~~\text{(applying strategy $xc_1$)}\\
	xy &\rightarrow c_2y ~~~\text{(applying strategy $c_2y$)}
	\end{align*}
	
	This two strategies provides same output for this input i.e., $xc_1 = c_2y$.
	
	Now whenever these two strategies $xc_1$ and $c_2y$ are applied to the input $\overline{x}y$, the corresponding outputs are,
	\begin{align*}
	\overline{x}y &\rightarrow \overline{x}c_1 ~~~\text{(applying strategy $xc_1$)}\\
	\overline{x}y &\rightarrow c_2y ~~~\text{(applying strategy $c_2y$)}
	\end{align*}
	
	As $xc_1=c_2y$, $\overline{x}c_1 \neq c_2y$. So the two outputs are different.
	
	Similarly whenever this two strategies are applied to the input $x\overline{y}$, the corresponding outputs are,
	\begin{align*}
	x\overline{y} &\rightarrow xc_1 ~~~\text{(applying strategy $xc_1$)}\\
	x\overline{y} &\rightarrow c_2\overline{y} ~~~\text{(applying strategy $c_2y$)}
	\end{align*}
	
	As $xc_1=c_2y$, $c_2\overline{y} \neq xc_1$. So the two outputs are different.
	
	Similarly whenever this two strategies are applied to the input $\overline{xy}$, the corresponding outputs are,
	\begin{align*}
	\overline{xy} &\rightarrow \overline{x}c_1 ~~~\text{(applying strategy $xc_1$)}\\
	\overline{xy} &\rightarrow c_2\overline{y} ~~~\text{(applying strategy $c_2y$)}
	\end{align*}
	
	As $xc_1=c_2y$, $\overline{x}c_1 = \overline{c_2}y \neq c_2\overline{y}$. So the two outputs are different.
	
	In this similar way, one can also argue the cases for other pair of strategies.
\end{proof}
\noindent

\begin{theorem}
	\label{thm4}
	{\it If a pair of strategies from two distinct groups provide the same output for a particular input $xy$ then this pair of strategies must provide two different outputs for the complement input $\overline{xy}$.}
\end{theorem}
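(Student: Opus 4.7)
The plan is to exploit the structural fact that each of the four groups transforms its output in a very specific way when the input is flipped from $xy$ to $\overline{xy}$, and then to observe that these four transformations are pairwise distinct. Concretely, I would assign to each group a \emph{change vector} $v_i \in \{0,1\}^2$ that records which output bits get flipped when the input is complemented, and then show that the change vectors of the four groups are all different.

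First, I would check the change vectors case by case. For a constant strategy in Group $1$, both output bits are independent of the input, so the output on $\overline{xy}$ equals the output on $xy$; the change vector is $00$. For an input-dependent strategy in Group $2$, the strategy has the form $x'y'$ with $x' \in \{x,\overline{x}\}$ and $y' \in \{y,\overline{y}\}$, so complementing both input bits flips both output bits; the change vector is $11$. For any strategy in Group $3$, the first output bit depends on $x$ while the second is a constant, so only the first bit flips; the change vector is $10$. Symmetrically, Group $4$ has change vector $01$.

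Next, suppose $S_1$ from group $G_i$ and $S_2$ from group $G_j$ (with $i \neq j$) both produce the output $ab$ on input $xy$. Then on input $\overline{xy}$, strategy $S_1$ outputs $ab \oplus v_i$ and strategy $S_2$ outputs $ab \oplus v_j$, where $v_i$ and $v_j$ are the change vectors of their groups. Since the four change vectors $\{00,01,10,11\}$ are pairwise distinct, $v_i \neq v_j$, so the two outputs on $\overline{xy}$ differ. This establishes the theorem.

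The main step is really the verification that the change-vector abstraction is well defined, i.e.\ that every strategy within a given group genuinely shares the same change vector regardless of the particular constant bits or complementation choices. This is routine but needs to be written out carefully for Groups $3$ and $4$, where one constant bit is free and the other bit may be $x$ or $\overline{x}$ (respectively $y$ or $\overline{y}$); in either case flipping the input negates only that one bit, which is exactly what the change vector records. I do not expect a serious obstacle here; the only thing to be careful about is not to confuse this theorem with Theorem~\ref{thm3}, which additionally rules out agreement on \emph{any} other input when the pair of groups is $(1,2)$ or $(3,4)$ — here we only need to rule out agreement on the single complementary input $\overline{xy}$, which is why the argument works uniformly for all six pairs of distinct groups.
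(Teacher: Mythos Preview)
Your proof is correct and in fact cleaner than the paper's own argument. The paper proceeds case by case: it first appeals to Theorem~\ref{thm3} to cover the pairs (Group~1, Group~2) and (Group~3, Group~4), and then separately works out the remaining cases where one strategy is constant or dependent and the other is mixed, checking in each case that the outputs on $\overline{xy}$ differ. Your change-vector abstraction replaces this piecemeal analysis with a single uniform observation: the map ``complement the input'' acts on each group's output by a fixed XOR with $v_i\in\{00,11,10,01\}$, and since these four vectors are pairwise distinct the conclusion follows at once for all six pairs of groups. This buys you a shorter, self-contained proof that does not need to invoke Theorem~\ref{thm3} as a lemma, and it also makes transparent \emph{why} the result holds (the groups are exactly distinguished by how they react to input complementation). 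The paper's approach, by contrast, is more concrete but somewhat ad hoc; its only advantage is that it explicitly names the two possible outputs $\overline{a}b$ or $a\overline{b}$ produced by a mixed strategy on $\overline{xy}$, information that Corollary~\ref{cor1} later reuses.
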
	

\begin{proof}
	: Here $xy$ is the input for which two different strategies from two different groups provide the same output. From the result of theorem \ref{thm3}, it can be easily argued that if the two strategies are from constant and dependent groups or from the two mixed groups then, these two strategies must provide two different outputs for the complement input $\overline{xy}$. 
	
	So there are two remaining cases that may occur. The first case is that whenever one strategy is from the constant group and the other strategy is from any one of the two mixed groups and the second case is that whenever one strategy is from the dependent group and the other strategy is from any one of the two mixed groups.
	
	{\bf Case 1:} In this case, one strategy from the constant group and the other strategy from one of the two mixed groups provide the same output (say $ab$) for the input $xy$. Whenever these two strategies are applied to the complement input $\overline{xy}$, then one can easily check that the constant strategy provides the output $ab$ but, the mixed strategy provides the output either $\overline{a}b$ or $a\overline{b}$.
	
	{\bf Case 2:} Similarly in this case, if one strategy from the dependent group and the other strategy is from one of the two mixed groups provide the same output (say $ab$) for the input $xy$, then the dependent strategy provides the output $\overline{ab}$ and the mixed strategy provides the output either $\overline{a}b$ or $a\overline{b}$ for the complement input $\overline{xy}$. This proves the result.
\end{proof}
\noindent

\begin{corollary}
	\label{cor1}
	{\it From the results of theorem \ref{thm3} and theorem \ref{thm4}, one can conclude that whenever there are two strategies in which one is from the constant (dependent) group and the other is from any one of the two mixed groups provide the same output for the input $xy$, then these strategies may not always provide two different outputs for the input $x\overline{y}$ and $\overline{x}y$.}
\end{corollary}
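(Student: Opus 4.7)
The plan is to establish the corollary by producing explicit witnesses, since the statement is negative (``may not always provide two different outputs''), so one counterexample in each of the two configurations suffices. The corollary is essentially a sharpness statement: Theorem~\ref{thm4} forced disagreement on the complement input $\overline{xy}$, but for the ``one-bit-flip'' inputs $x\overline{y}$ and $\overline{x}y$ that argument breaks down, and I want to exhibit exactly where.

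First I would handle the constant~+~mixed case. The key structural observation is invariance: a group~3 mixed strategy $xc_1$ or $\overline{x}c_1$ depends only on the first input bit, so its output is unchanged when $y$ is flipped; similarly a group~4 mixed strategy depends only on the second input bit. A constant strategy is of course invariant under both flips. Hence if a constant strategy and a group~3 mixed strategy coincide on input $xy$, they automatically coincide on $x\overline{y}$ as well; and for a group~4 mixed pair they coincide on $\overline{x}y$. A concrete witness is the pair (constant $00$, mixed $x0$): on input $00$ both output $00$, and on $x\overline{y}=01$ they still both output $00$, contradicting the analogue of Theorem~\ref{thm4} for one-bit-flip inputs.

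Next I would handle the dependent~+~mixed case by a short case check. Here both strategies vary with the input, so one cannot appeal directly to invariance. I would pick the pair (dependent $xy$, group~3 mixed $x0$) and tabulate all four inputs. On $xy=00$ the outputs are $00$ and $00$ (match). On $x\overline{y}=01$ the outputs are $01$ and $00$ (differ, consistent with the possibility Theorem~\ref{thm4} rules out only for complements). On $\overline{x}y=10$ the outputs are $10$ and $10$ (match). This shows there is a one-bit-flip input on which the two strategies do coincide, so the conclusion of Theorem~\ref{thm4} cannot be extended to non-complementary neighbours. An analogous check using a group~4 mixed strategy in place of $x0$ handles the symmetric situation.

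The main ``obstacle'' is not really technical; it is simply the careful bookkeeping of which coordinate each mixed strategy is sensitive to, and matching that against which bit is being flipped. Once one notices that flipping the bit a mixed strategy \emph{ignores} trivially preserves any coincidence, while flipping the bit it tracks may still happen to keep dependent/mixed pairs aligned, the counterexamples fall out immediately, and no further argument beyond the two small tables above is required.
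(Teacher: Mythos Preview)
Your proposal is correct and follows essentially the same approach as the paper: both establish the negative claim by exhibiting explicit counterexamples. The paper gives only a single witness (the dependent strategy $x\overline{y}$ and the group~4 mixed strategy $0\overline{y}$, which coincide on inputs $01$ and $00$), whereas you supply separate witnesses for both the constant+mixed and the dependent+mixed configurations and add the invariance observation explaining \emph{why} such coincidences must occur; this is a bit more thorough but not a different method.
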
 

For example, the dependent strategy $x\overline{y}$ and the mixed strategy $0\overline{y}$ both provide the output $00$ for the input $01$ however these two strategies also provide the output $01$ for the input $00$. So whenever the inputs are not complement to each other, one can't conclude anything about the outcomes. 

\begin{theorem}
	\label{thm2}
	{\it For a complement input pair (i.e., for two inputs of the form $xy$ and $\overline{xy}$), if one input has $m$ many outputs and the other input has $n$ many outputs then there are exactly $mn$ many strategies such that each of them satisfies an output for both the inputs.}
\end{theorem}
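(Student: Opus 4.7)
The plan is to establish that the map $\sigma \mapsto (\sigma(xy),\sigma(\overline{xy}))$, sending each of the $16$ classical strategies to the ordered pair of its outputs on the complement input pair, is a bijection onto $\{00,01,10,11\}^2$. Once this is in place, a strategy satisfies at least one of the $m$ acceptable outputs at $xy$ and simultaneously at least one of the $n$ acceptable outputs at $\overline{xy}$ exactly when its image lies in the Cartesian product of the two acceptable-output sets, whose cardinality is $mn$.

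To prove the bijection, I would argue injectivity and then invoke the equality $|{\rm domain}|=|{\rm codomain}|=16$. For two distinct strategies in the same group, Group~$1$ through Group~$4$ were constructed precisely so that their four members yield four different outputs on \emph{every} input; in particular the first coordinate of the image already distinguishes them. For two distinct strategies drawn from different groups, the only way their images can share the first coordinate is for them to agree at input $xy$; but Theorem~\ref{thm3} (for the constant/dependent and mixed-$3$/mixed-$4$ cross-pairs) and Theorem~\ref{thm4} (for the remaining cross-group combinations) together assert that two strategies from different groups which coincide at $xy$ must disagree at the complement input $\overline{xy}$. Hence their images differ in the second coordinate. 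Either way, distinct strategies produce distinct ordered pairs.

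Since an injection between finite sets of equal size is a bijection, the $16$ strategies are in one-to-one correspondence with the $16$ output pairs in $\{00,01,10,11\}^2$. Writing $A$ for the set of $m$ acceptable outputs at $xy$ and $B$ for the set of $n$ acceptable outputs at $\overline{xy}$, the set of strategies meeting the stated requirement is the preimage under this bijection of $A\times B$, which has exactly $|A\times B|=mn$ elements, proving the theorem.

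The only nontrivial ingredient is the cross-group disagreement on complement inputs, but this is precisely what Theorem~\ref{thm3} and Theorem~\ref{thm4} already package, so the argument reduces to combining those results with a clean pigeonhole count; no case-by-case output enumeration is required in the main body of the proof.
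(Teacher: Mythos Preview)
Your proof is correct and takes a route that is dual to the paper's. The paper argues \emph{surjectivity} of the map $\sigma\mapsto(\sigma(xy),\sigma(\overline{xy}))$ by explicit case analysis: fixing the output $ab$ at $xy$, it runs through the four possible outputs at $\overline{xy}$ and identifies in each case which group (constant, dependent, or one of the mixed groups) furnishes a strategy realizing that pair, then reads off the count from the $mn$ output pairs. You instead prove \emph{injectivity} by combining the within-group distinctness observation with Theorems~\ref{thm3} and~\ref{thm4}, and then invoke $16=16$ for bijectivity. Both arguments land on the same bijection, but yours recycles the cross-group disagreement lemmas already on the shelf and avoids a fresh case enumeration, while the paper's version is self-contained (it does not cite Theorems~\ref{thm3} or~\ref{thm4}) and has the side benefit of making the inverse correspondence explicit. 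One minor redundancy: Theorem~\ref{thm4} already handles \emph{all} cross-group pairs on complement inputs, so your separate appeal to Theorem~\ref{thm3} is unnecessary, though harmless.
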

\proof :
For a complement input pair, we show that if each
input has exactly one valid output, then there is exactly one
strategy that satisfies both inputs. Let us consider that the input $xy$ has output $ab$ and input $\overline{xy}$ has any one of the four outcomes $ab, \overline{a}b, a\overline{b}$ and $\overline{ab}$.

One can easily check that whenever $\overline{xy}$ has output $ab$ then the common strategy is a constant strategy, whenever $\overline{xy}$ has output either $\overline{a}b$ or $a\overline{b}$ then the common strategy is a mixed strategy and whenever $\overline{xy}$ has output $\overline{ab}$ then the common strategy is an input-dependent strategy which satisfies the outputs for both the inputs $xy$ and $\overline{xy}$.

This implies that there must be a strategy corresponding to every different output pair for two complement inputs. So for a complement input pair, if one input has $m$ many outcomes and the other input has $n$ many outcomes, then there are exactly $mn$ different pair of outcomes. Moreover for each pair of outcomes, there exist a strategy that satisfies the outcomes for both the inputs. So, there are exactly $mn$ many strategies which satisfy an output for both the inputs.\qed

\noindent

\begin{theorem}
	\label{thm5}
	{\it If an input (say $xy$) has a complement output pair and the corresponding complement input (i.e., $\overline{xy}$) also has two outcomes (may not be complement), then the four strategies corresponding to this input pair $xy, \overline{xy}$ must provide four different outcomes for atleast one of the rest two inputs (i.e., for inputs $\overline{x}y$ and $x\overline{y}$).}
\end{theorem}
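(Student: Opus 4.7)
The plan is to apply Theorem~\ref{thm2} to obtain the four strategies explicitly and then track what each of them outputs on the remaining two inputs $\overline{x}y$ and $x\overline{y}$. The core observation is that a classical strategy is a pair $(f,g)$ of one-bit Boolean functions, one for each player, and knowing $(f(x),g(y))$ together with $(f(\overline{x}),g(\overline{y}))$ completely determines $f$ and $g$, hence the strategy's behaviour on every input.

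First I would set up notation as follows. Let the two outputs at input $xy$ be $ab$ and $\overline{ab}$, and let the two outputs at the complement input $\overline{xy}$ be $o_1 = pq$ and $o_2 = rs$. By Theorem~\ref{thm2} there are exactly four strategies $S_{11}, S_{12}, S_{21}, S_{22}$ satisfying one output of $xy$ together with one output of $\overline{xy}$, indexed by the four pairs in $\{ab, \overline{ab}\} \times \{o_1, o_2\}$. For each such strategy the first-player function $f$ is pinned down at both $x$ and $\overline{x}$, and similarly the second-player function $g$ at both $y$ and $\overline{y}$, so the strategy's output on $\overline{x}y$, namely $(f(\overline{x}), g(y))$, and on $x\overline{y}$, namely $(f(x), g(\overline{y}))$, can be read off immediately.

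Next I would tabulate directly. The set of outputs produced by $S_{11}, S_{12}, S_{21}, S_{22}$ on the input $\overline{x}y$ works out to $\{(p,b), (r,b), (p,\overline{b}), (r,\overline{b})\}$, while on the input $x\overline{y}$ it is $\{(a,q), (a,s), (\overline{a},q), (\overline{a},s)\}$. The first set consists of four pairwise distinct pairs exactly when $p \neq r$ (since $b$ and $\overline{b}$ are already distinct), and analogously the second set has four distinct pairs exactly when $q \neq s$.

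Finally, because $o_1 \neq o_2$ are two distinct outputs of $\overline{xy}$, the strings $pq$ and $rs$ must differ in at least one coordinate, so at least one of the conditions $p \neq r$ or $q \neq s$ holds. Hence the four strategies produce four pairwise distinct outputs on at least one of $\overline{x}y$ or $x\overline{y}$, which is what the theorem asserts. There is no real obstacle in the argument; the only subtle point is recognizing that a strategy is already completely pinned down by its behaviour on a complementary input pair, which is precisely what makes Theorem~\ref{thm2} yield exactly four strategies rather than more, and it is this rigidity that forces the spread of outputs on the remaining inputs.
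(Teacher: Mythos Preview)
Your proof is correct and takes a genuinely different route from the paper's argument. The paper proceeds by case analysis on whether the output pair at $\overline{xy}$ is itself complementary, classifies the four resulting strategies by their group type (constant, input-dependent, or mixed), and then invokes Theorems~\ref{thm3} and~\ref{thm4} to rule out collisions on the remaining inputs. Your approach bypasses this machinery entirely: by observing that a deterministic strategy $(f,g)$ is completely pinned down by its values on a complementary input pair, you tabulate the four outputs on $\overline{x}y$ and $x\overline{y}$ directly and reduce the claim to the elementary fact that $pq\neq rs$ forces $p\neq r$ or $q\neq s$. This is shorter and more transparent for this particular statement, and it also makes immediately visible why the complement hypothesis on the output pair at $xy$ is needed (it guarantees both $a\neq\overline{a}$ and $b\neq\overline{b}$, so that distinctness on each remaining input hinges on only one coordinate of $o_1$ versus $o_2$), thereby explaining the Note that follows the theorem. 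The paper's approach, by contrast, ties the result into the group classification of strategies that is reused throughout the rest of the analysis.
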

\proof :
Whenever each of the inputs of the complement input pair $xy, \overline{xy}$ has two outcomes and the input $xy$ has a complement output pair, the input $\overline{xy}$ has two possibilities for output. Either the outcomes of $\overline{xy}$ are complement to each other or they are not complement to each other.

\textbf{Case 1:} Whenever the input $\overline{xy}$ has complement output pair, then also there are two possibilities. Either $\overline{xy}$ has the same complement pair as in $xy$ or the complement pair of $\overline{xy}$ is different from the output of $xy$.

Whenever $xy$ and $\overline{xy}$ have the same complement output pair, one can check that the common strategies are $2$ constant and $2$ dependent strategies. We can easily verify that for the input pairs $xy, \overline{xy}$ whenever a constant and a dependent strategy collide for a particular input, the same constant
and dependent strategy must not collide for the other input (rather the same constant strategy collide with the other dependent strategy for the other input). From the result of theorem \ref{thm3}, we can argue that for this case, the four common strategies must provide four different outputs for all the rest two inputs. Similarly one can conclude this same result for the case when $\overline{xy}$ has a different complement output pair than $xy$.

\textbf{Case 2:} Whenever the input $\overline{xy}$ has non-complement output pair, then the common strategies are one constant, one input-dependent and two mixed strategies for $xy, \overline{xy}$ pair. One can verify that among two mixed strategies, one collides with the constant strategy and the other collides with the dependent strategy for input $xy$. But for input $\overline{xy}$, the two mixed strategies and the constant and the dependent strategy collide among themselves. Now for any one of the remaining two inputs, the constant strategy collides with
those mixed strategies for which they provide different outputs
for the input $xy$ and similarly for the dependent and other mixed
strategies. So from the result of theorem \ref{thm3} and theorem \ref{thm4}, one can conclude that these four strategies must provide four different outputs for the remaining input. This concludes the proof.\qed

\textbf{Note}: {\it If both the inputs $xy$ and $\overline{xy}$ have non-complement output pair, then the four different strategies may not provide four different outputs for any of the remaining two inputs $\overline{x}y$ and $x\overline{y}$}.

\begin{lemma}
\label{lem1}	
{\it : If an input (say $xy$) has a complement output pair and its complement input (i.e., $\overline{xy}$) has only one outcome, then the two strategies from $xy, \overline{xy}$ pair must provide non-complement output pairs for each of the rest two inputs.}
\end{lemma}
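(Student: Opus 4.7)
The plan is to exploit the factorisation of classical strategies across the two spatially separated players. Let input $xy$ have the two complement outputs $ab$ and $\overline{ab}$, and let $cd$ be the unique output of $\overline{xy}$. By Theorem~\ref{thm2} applied with $m = 2$ and $n = 1$, there are exactly two strategies that hit an allowed output on both inputs $xy$ and $\overline{xy}$; call them $s_1$ and $s_2$. Up to relabelling I may assume $s_1(xy) = ab$, $s_2(xy) = \overline{ab}$, and $s_1(\overline{xy}) = s_2(\overline{xy}) = cd$.

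Because the two players cannot communicate, every classical strategy in the list of Subsection~\ref{incos} factors as $s = (f, g)$, where $f$ is the first player's response chosen from $\{0, 1, x, \overline{x}\}$ (regarded as a function of its own input bit) and $g$ is the second player's response chosen from $\{0, 1, y, \overline{y}\}$; this is exactly the $4 \times 4 = 16$ count quoted earlier. Writing $s_i = (f_i, g_i)$, the requirement $s_1(xy) = \overline{s_2(xy)}$ forces $f_1(x) \neq f_2(x)$ and $g_1(y) \neq g_2(y)$, while $s_1(\overline{xy}) = s_2(\overline{xy})$ forces $f_1(\overline{x}) = f_2(\overline{x})$ and $g_1(\overline{y}) = g_2(\overline{y})$.

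Evaluating on $\overline{x}y$ then gives $s_i(\overline{x}y) = (f_i(\overline{x}), g_i(y))$: the first bits of $s_1(\overline{x}y)$ and $s_2(\overline{x}y)$ coincide while the second bits differ, so the two outputs agree in exactly one coordinate and hence are not complement. By the symmetric argument on $x\overline{y}$, the two outputs agree in the remaining coordinate and again fail to be complement, which is exactly the claim of the lemma.

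The only genuine obstacle is articulating the per-party factorisation of strategies cleanly; once that is in place, no case analysis on the concrete values of $ab$ and $cd$ is needed and the argument reduces to two one-line bookkeepings. For readers who prefer the case-based style of Theorems~\ref{thm3}--\ref{thm5}, I could equivalently enumerate the four possibilities for $cd$ relative to $ab$ (one constant/dependent pair when $cd \in \{ab, \overline{ab}\}$, and a pair of mixed strategies from different mixed groups when $cd$ differs from $ab$ in exactly one bit) and verify non-complementarity in each instance using Theorems~\ref{thm3} and~\ref{thm4}; both routes land at the same conclusion.
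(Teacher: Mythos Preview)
Your proof is correct and takes a genuinely different route from the paper. The paper proceeds by case analysis on the value of the single output $cd$ of $\overline{xy}$: in Case~1 ($cd \in \{ab,\overline{ab}\}$) it identifies the two common strategies as one constant and one input-dependent strategy and computes their values on $\overline{x}y$ and $x\overline{y}$ explicitly; in Case~2 ($cd \in \{\overline{a}b,a\overline{b}\}$) it identifies them as two mixed strategies from different mixed groups and repeats the computation. You instead exploit the per-player product structure $s=(f,g)$ directly: the hypotheses $s_1(xy)=\overline{s_2(xy)}$ and $s_1(\overline{xy})=s_2(\overline{xy})$ pin down which of $f_1,f_2$ and $g_1,g_2$ agree or disagree at each input bit, and evaluating at $\overline{x}y$ and $x\overline{y}$ then shows in one line that the two outputs coincide in exactly one coordinate. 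Your argument is shorter, avoids any branching on $cd$, and makes the underlying reason transparent; the paper's argument, while longer, stays within the strategy-group taxonomy of Subsection~\ref{incos} that it reuses in Theorems~\ref{thm3}--\ref{thm5}. Both are complete, and your final paragraph already sketches how to recover the paper's case split from your framework.
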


\begin{proof}
: Let us consider that the input $xy$ has two complement outputs of the form $ab$ and $\overline{ab}$. Then the complement input $\overline{xy}$ has two possibilities, either the output of $\overline{xy}$ is same as one of the outputs of $xy$ or the output of $\overline{xy}$ is different from the output of $xy$.

\textbf{Case 1:} Whenever the output of $\overline{xy}$ is same with one of the output of $xy$, there are one constant strategy and one input-dependent strategy. Let us assume that the common output of $xy$ and $\overline{xy}$ is $ab$. Then the constant strategy is $ab$ and the input-dependent strategy is $mn$ (say) where $m \in \{x, \overline{x}\}$ and $n \in \{y, \overline{y}\}$. This implies that,
\begin{align*}
xy &\rightarrow \overline{ab} ~~~\text{(applying strategy $mn$)}\\
\overline{xy} &\rightarrow ab ~~~\text{(applying strategy $mn$)}
\end{align*}
Thus we conclude that
\begin{align*}
x &\rightarrow \overline{a}~~~\text{(applying strategy  $m$)}\\
y &\rightarrow \overline{b}~~~\text{(applying strategy $n$)}\\
\end{align*}

Hence it is clear that the strategy
$mn$ provides the output $a\overline{b}$ for the input $\overline{x}y$ and provides the
output $\overline{a}b$ for input $x\overline{y}$. So the two strategies $ab$ and $mn$
provide non-complement output pair $ab$ and $a\overline{b}$ for the input $\overline{x}y$ and provide non-complement output pair $ab$ and $\overline{a}b$ for the input $x\overline{y}$.

\textbf{Case 2:} In a similar way, whenever the output of $\overline{xy}$ is different from the outputs of $xy$, there are two mixed strategies from two different groups. Let us consider that the outputs of $xy$ are $ab$, $\overline{ab}$ and the output of $\overline{xy}$ is either $\overline{a}b$ or $a\overline{b}$. Without loss of generality, we assume that the output of $\overline{xy}$ is $\overline{a}b$. Let the two mixed strategies are $mc_1$ and $c_2n$, where $m \in \{x, \overline{x}\}$ and $n \in \{y, \overline{y}\}$ are the dependent bits and $c_1, c_2 \in \{0, 1\}$ are the constant bits. If we consider that the strategy $mc_1$ provides output $ab$ and $c_2n$ provides output $\overline{ab}$ for input $xy$, then one can easily check that,
\begin{align*}
x &\rightarrow a~~~\text{(applying strategy $m$)}\\
y &\rightarrow \overline{b}~~~\text{(applying strategy $n$)}
\end{align*}

It is also clear that $c_1 = b$ and $c_2 = \overline{a}$.

Hence, one can easily check that the strategy $mc_1$ and $c_2n$ provide outputs $\overline{a}b$ and $\overline{a}\overline{b}$ respectively for input $\overline{x}y$. Similarly one can also check that the strategy $mc_1$ and $c_2n$ provide outputs $ab$ and $\overline{a}b$ respectively for input $x\overline{y}$. This implies that the mixed strategies from two different groups also provide non-complement output pair for the rest of the two inputs. Similarly one can also check the other cases.
\end{proof}

\subsection{Analysis of the Maximum Success Probability in Classical Scenario}
\label{class}

In the classical scenario of a nonlocal game, the players have to fix some strategies before the game begins. After getting the input bits from the referee, the players aren't allowed to communicate with each other. For the binary input binary output two-party nonlocal games, each player has only two possibilities for their input bits (either $0$ or $1$) and has two choices (either $0$ or $1$) for the output bits. In this scenario, each player has atmost $4$ different strategies (either output $0$ or $1$ or the input bit itself or the complement of the input bit) to generate their output bits. This implies that for a particular two-bit input string provided by the referee, there are atmost $16$ different strategies for the two players in a classical scenario. 

Among these $16$ different strategies, the strategy which provides the maximum success probability for a particular game is the optimal classical strategy and the corresponding success probability is the maximum classical success probability for this game. All the possible output strategies (by the two players) for a particular game and their corresponding success probabilities can be represented in a tabular form as mentioned in Table \ref{classical}.

In this Table, each $p_i$ denotes the fraction of inputs for which the game can be won using the $i$-th strategy. Two players can choose any of these $16$ different strategies before the game begins and later can output their bits accordingly. For example, if they choose the first strategy specified in Table \ref{classical}, both of them can choose $0$ as an output irrespective of their inputs. Similarly whenever they choose the second strategy, the first player always outputs $0$ irrespective of his inputs whereas, the second player outputs his corresponding input bit itself, i.e., if he receives the input $0$, he outputs $0$, otherwise he outputs $1$. After their output, the referee checks the fraction of inputs for which the players win the game. The strategy which generates the winning outcomes for most of the inputs of a particular game is considered as the optimal strategy corresponding to that game. This implies that from Table \ref{classical}, one can obtain the maximum classical success probability ($p_{max}$) as $p_{max} = \max_{i} p_i$.

\begin{table}[hb]
	\begin{center}
		\setlength{\tabcolsep}{0.35em}\scalebox{1}{
			\begin{tabular}{ |c|c|c|c|c| } 
				\hline
				\multicolumn{2}{|c|}{\bf Output for Alice ($a$)} & \multicolumn{2}{|c|}{\bf Output for Bob ($b$)} & ~\\
				\hline
				Output for & Output for & Output for & Output for & Success \\
				input $x=0$ & input $x=1$ & input $y=0$ & input $y=1$ & Probability  \\
				\hline
				0 & 0 & 0 & 0 & $p_1$\\
				0 & 0 & 0 & 1 & $p_2$\\
				0 & 0 & 1 & 0 & $p_3$\\
				0 & 0 & 1 & 1 & $p_4$\\
				0 & 1 & 0 & 0 & $p_5$\\
				0 & 1 & 0 & 1 & $p_6$\\
				0 & 1 & 1 & 0 & $p_7$\\
				0 & 1 & 1 & 1 & $p_8$\\
				1 & 0 & 0 & 0 & $p_9$\\
				1 & 0 & 0 & 1 & $p_{10}$\\
				1 & 0 & 1 & 0 & $p_{11}$\\
				1 & 0 & 1 & 1 & $p_{12}$\\
				1 & 1 & 0 & 0 & $p_{13}$\\
				1 & 1 & 0 & 1 & $p_{14}$\\
				1 & 1 & 1 & 0 & $p_{15}$\\
				1 & 1 & 1 & 1 & $p_{16}$\\
				\hline
		\end{tabular}}
	\end{center}
	\caption{Success probabilities of a game for all possible classical strategies} 
	\label{classical}
\end{table}

As every binary input binary output two-party nonlocal game has $4$ possible inputs, one can easily check that the classical success probability for each of the possible $16$ strategies must belong to the set $\{0, 0.25, 0.5, 0.75, 1\}$. From the result of theorem~\ref{thm2}, it is clear that there must be a classical strategy corresponding to each of the complement input pairs. This implies that for every binary input binary output two-party nonlocal game, the maximum classical success probability must be atleast $0.5$. Similarly if a game has inconsistent outputs (as mentioned in subsection~\ref{incos}) for an input pair, then according to the discussion of subsection~\ref{incos}, the maximum classical success probability must be less than $1$ (i.e., either $0.5$ or $0.75$). 

\subsection{Analysis of the Maximum Success Probability in Quantum Scenario}
\label{quant}

In the quantum strategy of a two-party nonlocal game, the two players initially share some entanglement among themselves (before the game begins) and then during the game, they perform some specific (unitary) operations on their qubits (based on the inputs) and measure their qubits to get the output bits.

Let us assume that the two players (say Alice and Bob) share the Bell-state $\vert \psi \rangle _{AB} = \frac{\vert 00 \rangle + \vert 11 \rangle }{\sqrt{2}}$ among themselves and Alice measures in $\theta_0 (\theta_1)$ rotated basis for the input $0 (1)$ and Bob measures in $\psi_0 (\psi_1)$ rotated basis for the input $0 (1)$. 

Now, it is easy to check that whenever the referee provides the input bit $0$ to both Alice and Bob (i.e., for the input $00$), the shared states between Alice and Bob (after applying their respective unitary operators) are of the form

\begin{eqnarray*}
	&&\frac{1}{\sqrt{2}}\left[(\cos{\theta_0}\vert0\rangle + \sin{\theta_0}\vert1\rangle)(\cos{\psi_0}\vert0\rangle + \sin{\psi_0}\vert1\rangle)\right] \\
	&~&+\frac{1}{\sqrt{2}}\left[(-\sin{\theta_0}\vert0\rangle+\cos{\theta_0}\vert1\rangle)(-\sin{\psi_0}\vert0\rangle+\cos{\psi_0}\vert1\rangle)\right]\\
	&=&\frac{1}{\sqrt{2}}\left[(\cos{\theta_0}\cos{\psi_0}+\sin{\theta_0}\sin{\psi_0})\vert 00 \rangle  + (\cos{\theta_0}\sin{\psi_0}-\sin{\theta_0}\cos{\psi_0})\vert 01 \rangle \right]\\
	&~&+\frac{1}{\sqrt{2}}\left[(\sin{\theta_0}\cos{\psi_0}-\cos{\theta_0}\sin{\psi_0})\vert 10 \rangle  + (\cos{\theta_0}\cos{\psi_0}+\sin{\theta_0}\sin{\psi_0})\vert 11 \rangle \right]\\
	&=& \frac{1}{\sqrt{2}} \left[\cos(\theta_0-\psi_0)\vert 00 \rangle  - \sin(\theta_0-\psi_0)\vert 01 \rangle  + \sin(\theta_0-\psi_0)\vert 10 \rangle  + \cos(\theta_0-\psi_0)\vert 11 \rangle \right]
\end{eqnarray*}

So for the input $00$, the probability of getting each of the outputs $00$ and $11$ is $\frac{1}{2}\cos^2(\theta_0-\psi_0)$ and the probability of getting each of the outputs $01$ and $10$ is $\frac{1}{2}\sin^2(\theta_0-\psi_0)$.

Similarly for the input $01$, the shared states between Alice and Bob after applying the specific unitary operations are of the form $\cos(\theta_0-\psi_1)\vert 00 \rangle  - \sin(\theta_0-\psi_1)\vert 01 \rangle  + \sin(\theta_0-\psi_1)\vert 10 \rangle  + \cos(\theta_0-\psi_1)\vert 11 \rangle $. So in this case, the probability of getting each of the outputs $00$ and $11$ is $\frac{1}{2}\cos^2(\theta_0-\psi_1)$ and the probability of getting each of the outputs $01$ and $10$ is $\frac{1}{2}\sin^2(\theta_0-\psi_1)$.

In this similar way, one can easily check that for the input $10$, the shared states between Alice and Bob after applying the specific unitaries are of the form $\cos(\theta_1-\psi_0)\vert 00 \rangle  - \sin(\theta_1-\psi_0)\vert 01 \rangle  + \sin(\theta_1-\psi_0)\vert 10 \rangle  + \cos(\theta_1-\psi_0)\vert 11 \rangle $ and the corresponding probabilities are $\frac{1}{2}\cos^2(\theta_1-\psi_0)$ (for each of the outputs $00$ and $11$) and $\frac{1}{2}\sin^2(\theta_1-\psi_0)$ (for each of the outputs $01$ and $10$).

Similarly for the input $11$, the shared states between Alice and Bob after applying the specific unitaries are of the form $\cos(\theta_1-\psi_1)\vert 00 \rangle  - \sin(\theta_1-\psi_1)\vert 01 \rangle  + \sin(\theta_1-\psi_1)\vert 10 \rangle  + \cos(\theta_1-\psi_1)\vert 11 \rangle $ and the corresponding probabilities are $\frac{1}{2}\cos^2(\theta_1-\psi_1)$ (for each of the outputs $00$ and $11$) and $\frac{1}{2}\sin^2(\theta_1-\psi_1)$ (for each of the outputs $01$ and $10$).

From these quantum success probability expressions for different inputs, it is clear that for a particular input $xy$, the probability of getting each of the outputs $00$ and $11$ is $\frac{1}{2}\cos^2\alpha$ and the probability of getting each of the outputs $01$ and $10$ is $\frac{1}{2}\sin^2\alpha$ where $\alpha = (\theta_x-\psi_y)$ (according to our mentioned strategy).

As for every nonlocal game, the referee is supposed to provide the input bits randomly, the two players calculate the overall success probability considering each of the inputs equally likely. For any particular game, the expression of the quantum success probability depends on the distribution of the successful outcomes for all the possible inputs. Depending on this distribution, the quantum success probability expressions involve the variables $\theta_0, \theta_1, \psi_0, \psi_1$. After getting the quantum success probability expression for a particular game, one can easily find the values of $\theta_0, \theta_1, \psi_0, \psi_1$ for which the success probability becomes maximum. For demonstration, here we can consider the example of the CHSH Game that corresponds to the partition $2+2+2+2$. From Subsection~\ref{parchsh}, one can conclude that the quantum success probability (according to the discussion of this section) of the CHSH game is $\frac{1}{4} \left[\cos^2(\theta_0 - \psi_0) + \cos^2(\theta_0 - \psi_1) + \cos^2(\theta_1 - \psi_0) + \sin^2(\theta_1 - \psi_1)\right]$. Now, one can vary different values of $\theta_0, \theta_1, \psi_0, \psi_1$ to get the corresponding maximum quantum success probability. From the analysis in Subsection~\ref{parchsh} (and also from the previously known results), it can be concluded that one can find some values of $\theta_0, \theta_1, \psi_0, \psi_1$ for which the quantum success probability will be $0.853$, which is maximum for this partition (for a detailed analysis of this calculation, one may refer to the Subsection~\ref{parchsh}). Similarly for the other partition of games, one can calculate the maximum quantum success probability by following the strategy discussed in this subsection.  

One can verify that for the games having inconsistent outputs, the quantum success probabilities corresponding to the inconsistent input pair are of the form $\frac{1}{2}\cos^2\alpha$ and $\frac{1}{2}\sin^2\alpha$. So, the maximum quantum success probability for these games having inconsistent outputs is $0.75$. This implies that all the nonlocal games having inconsistent outputs may not offer a quantum advantage (i.e., the maximum quantum success probability is greater than the maximum classical success probability)\footnote{In this context one should remember that every classical strategy is also a quantum one where no entanglement is shared between the parties}.

\section{Analysis of the Results in \cite{NGA19}}
\label{error}

It is evident from the discussion till now that every binary input binary output two-party nonlocal game can be represented as a $4$-variable boolean function. One can also consider the inputs and the outputs separately as $2$-variable boolean functions for binary input binary output two-party nonlocal games and compose these two functions to construct $4$-variable functions. For example in the CHSH game, the input function is $f(x,y)=x \wedge y$ and the output function is $g(a,b)=a\oplus b$. The actual function that represents the CHSH game is just the composition of these two (input and output) functions.

Recently some analysis has been done in this direction in~\cite{NGA19} (considering all the non-constant $2$-variable boolean functions and composing every possible pairs among them to construct the corresponding $4$-variable boolean functions) to explore the performance of some $4$-variable boolean functions (or binary input binary output two-party nonlocal games) as distinguishers for the certification of different dimensional quantum states. As the authors consider only non-constant boolean functions in~\cite{NGA19}, the total number of $4$-variable boolean functions that they have considered are $(2^{2^2}-2)\times(2^{2^2}-2) = 14\times14 = 196$. However, there are total $2^{2^4}=65536$ possible $4$-variable boolean functions. So in~\cite{NGA19}, only a small fraction of the games are explored from the class of all possible binary input binary output two-party nonlocal games.

There are some miscalculations in~\cite[Table 1]{NGA19} regarding the number of different boolean functions which we would like to point out here. In Table $1$ of~\cite{NGA19}, it is mentioned that the total number of function pairs $(f,g_2)$ such that each of $f(x,y)$ and $g_2(a,b)$ contains one $0 (1)$ is $32 (32)$. However one can verify that the total number of such function pairs is actually $16$. For a detail analysis corresponding to this result, one may refer to Proposition~\ref{totcount}.

\begin{proposition}
	\label{totcount}
	Let $g_i: \mathbb{Z}_2 \times \mathbb{Z}_2 \rightarrow \mathbb{Z}_2$ be a boolean function such that $\mid g_i^{-1}(0) \mid = 1$ where $g_i^{-1}(0) = \{ (x, y) \in \mathbb{Z}_2 \times \mathbb{Z}_2: g_i(x, y) = 0 \}$ for $i= 1, 2$. Let $f(x_1, x_2, x_3, x_4) = g_1 (x_1, x_2) * g_2(x_3, x_4)$ where $*$ is a binary operation on $\mathbb{Z}_2$. Then given a binary operation $*$, there are atmost $16$ different possibilities for $f$.   
\end{proposition}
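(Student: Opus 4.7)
The plan is a short counting argument. I first enumerate the admissible shapes for each $g_i$, multiply the counts, and observe that each pair together with the fixed operation $*$ determines $f$ uniquely.

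First I would count the boolean functions $g \colon \mathbb{Z}_2 \times \mathbb{Z}_2 \to \mathbb{Z}_2$ satisfying $|g^{-1}(0)| = 1$. The domain has exactly $4$ points, and the condition says precisely one of them is sent to $0$ (with the other three sent to $1$), so there are exactly $\binom{4}{1} = 4$ such functions: one for each choice of the unique zero from $\{(0,0),(0,1),(1,0),(1,1)\}$. Applying this to $i = 1$ and $i = 2$ separately yields $4$ candidates for $g_1$ and $4$ candidates for $g_2$.

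Next, since $g_1$ and $g_2$ are chosen independently, the number of ordered pairs $(g_1, g_2)$ satisfying both cardinality constraints is $4 \times 4 = 16$. For a fixed binary operation $*$ on $\mathbb{Z}_2$, the recipe $f(x_1,x_2,x_3,x_4) = g_1(x_1,x_2) * g_2(x_3,x_4)$ assigns to each such pair a single, well-defined $4$-variable boolean function $f$. Consequently the set of possible $f$ is the image of a map from a $16$-element set and therefore has at most $16$ elements, which is exactly the bound claimed.

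There is essentially no technical obstacle here: the only subtlety worth flagging is the word ``atmost,'' which is needed because distinct pairs $(g_1,g_2)$ may collapse to the same $f$ for certain choices of $*$ (for instance, if $*$ is a constant operation all $16$ pairs yield the same $f$; more generally, if $*$ annihilates one argument the count drops to $4$). Since the statement only asserts an upper bound, such non-injectivity of the pair-to-$f$ map does not affect the argument, and the proof is complete after the counting.
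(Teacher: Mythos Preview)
Your proposal is correct and follows essentially the same approach as the paper: count the $4$ admissible $g_1$'s and the $4$ admissible $g_2$'s, then observe that for a fixed $*$ the $4\times 4=16$ pairs surject onto the possible $f$'s. Your additional remark explaining why only ``at most'' can be claimed (collapse when $*$ ignores an argument) is a nice clarification that the paper's proof does not make explicit.
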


\begin{proof}
	:~Since $\mid g_i^{-1}(0) \mid = 1$, there is $(a_i, b_i)$ such that $g_i(a_i, b_i) = 0$ and $g_i(x, y) = 1$ for $(x, y) \neq (a_i, b_i)$. Now, for each of the two functions $g_1$ and $g_2$, there are four different choices that satisfy the above condition namely, 
	\begin{align}
		g_1^{(1)}(0, 0) &= 0 \text{ and } g_1^{(1)} (x, y) = 1 \text{ for } (x, y) \neq (0, 0)\\
		g_1^{(2)}(0, 1) &= 0 \text{ and } g_1^{(2)} (x, y) = 1 \text{ for } (x, y) \neq (0, 1)\\
		g_1^{(3)}(1, 0) &= 0 \text{ and } g_1^{(3)} (x, y) = 1 \text{ for } (x, y) \neq (1, 0)\\
		g_1^{(4)}(1, 1) &= 0 \text{ and } g_1^{(4)} (x, y) = 1 \text{ for } (x, y) \neq (1, 1). 
	\end{align} 
	Similarly for $g_2$, there are also four different choices namely, 
	\begin{align}
		g_2^{(1)}(0, 0) &= 0 \text{ and } g_2^{(1)} (x, y) = 1 \text{ for } (x, y) \neq (0, 0)\\
		g_2^{(2)}(0, 1) &= 0 \text{ and } g_2^{(2)} (x, y) = 1 \text{ for } (x, y) \neq (0, 1)\\
		g_2^{(3)}(1, 0) &= 0 \text{ and } g_2^{(3)} (x, y) = 1 \text{ for } (x, y) \neq (1, 0)\\
		g_2^{(4)}(1, 1) &= 0 \text{ and } g_2^{(4)} (x, y) = 1 \text{ for } (x, y) \neq (1, 1). 
	\end{align} 
	Thus $f(x_1, x_2, x_3, x_4) = g_1^{(i)} (x_1, x_2) * g_2^{(j)} (x_3, x_4)$ for some $1 \leq i, j \leq 4$. As there are maximum $4$ different choices for each of $g_1^{(i)}$ and $g_2^{(j)}$ for some $1 \leq i, j \leq 4$, there are atmost $4 \times 4 = 16$ different choices for $f$.
\end{proof}

It is also mentioned (in~\cite{NGA19} Table $1$) that the total number of function pairs such that $f(x,y)$ contains one $0$ and $g_2(a,b)$ contains one $1$ are $6$. For this case also, similar to the derivation performed in the proof of proposition~\ref{totcount}, one can verify that the total number of such function pairs is also $16$.

In~\cite{NGA19}, the authors have proposed the idea of distinguishing different dimensional quantum states with the help of some nonlocal games. In their paper, they have explored the performance of some two-party nonlocal games in quantum scenario with the intention of finding those games which provide significant advantage in the quantum winning probability for different dimensional states. However, the main limitation in their approach is that they have explored only $196$ functions from the set of $65536$ possible $4$-variable Boolean functions. Because of this limitation, they might not consider the game which is the most efficient as the dimensionality distinguisher (i.e., which has the maximum probability difference in the quantum scenario for different dimensional states) among all possible binary input binary output two-party nonlocal games. In this article, we have considered all those binary input binary output two-party nonlocal games which have atleast one successful outcome for every possible input and evaluate their performance both in classical and quantum scenario (considering the strategy mentioned in Subsection~\ref{quant}). However, we haven't analyzed anything regarding the performance of those games as dimensionality distinguishers.

\section{Analysis of the Binary Input Binary Output Two-party Nonlocal Games}
\label{main}

In the current context, we are interested in finding all those two-party binary input binary output nonlocal games where one can achieve quantum advantage. So far, CHSH game is the most well known game that offers a separation around $0.1$ between the maximum classical (which is $0.75$) and the maximum quantum (which is around $0.853$) success probability.

From the definition of the partition introduced in definition~\ref{defpart}, one can easily check that the CHSH game can be represented as a $2+2+2+2$ partition based on the distribution of its outputs. As our main intention is to find out all those games that offer quantum advantage (with maximum quantum success probability greater than the existing maximum for the two party scenario, i.e., $0.853$), here we consider only those games for which the number of valid outputs corresponding to each possible input is non-zero (so that there is a chance of achieving the maximum quantum success probability greater than $0.853$ for random inputs).

For every number of successful outcomes (i.e., the number of $0'$s in the output column of a boolean function), we first find out all possible partitions of that outcome and then explore the performance of the games corresponding to each of those partitions to derive the maximum classical and the maximum quantum success probabilities. For example, the games having $8$ successful outcomes (i.e., $8$ number of $0'$s in the output column of the boolean function representation of those games), there are four possible partitions such that each input has atleast one successful outcome. In this section, we first find out all those partitions for every possible number of successful outcomes and then analyze the performance of the games corresponding to each of those partitions according to the techniques mentioned in Subsection \ref{class} and Subsection \ref{quant}. 

\subsection{Games Corresponding to $8$ Successful Outcomes}

In this subsection, we analyze (in details) the performance of the games corresponding to all possible partitions for $8$ successful outcomes in both classical and quantum scenario.

\subsubsection{\it Analysis for partition $4+2+1+1$:}

For this partition of games, there must be a complement input pair (i.e., of the form $xy$ and $\overline{xy}$) either both the inputs have $1$ outcome or one input has $1$ outcome and the other input has $2$ outcomes. Whenever the two $1$ outcomes have a complement input pair, the strategy corresponding to this input pair must satisfy one output for the input corresponding to $4$ outcomes. Similarly if the $2$ outcome and a $1$ outcome has a complement input pair, then the two strategies corresponding to this complement input pair must satisfy atleast one output for the input having $4$ outcomes.

So the minimum classical success probability for all the games of this partition is $0.75$. From the discussion in subsection \ref{quant}, one can easily check that the maximum quantum success probability corresponding to each of the inputs having $4$ and $2$ outcomes is $1$ and for inputs having $1$ outcome is $0.5$. So, the maximum quantum success probability for any game of this partition is  $\frac{1}{4}[1+1+0.5+0.5] = 0.75$. This implies that for this partition of games, one can't achieve any advantage in quantum scenario. For example, here we consider the following game corresponding to this partition.

\begin{table}[htbp]
	\begin{center}
		\begin{tabular}{ |c|c| } 
			\hline
			Input & Corresponding output\\
			\hline
			\hline
			00 & 00, 01, 10, 11\\
			\hline
			01 & 00, 11\\
			\hline
			10 & 01\\
			\hline 
			11 & 10\\
			\hline
		\end{tabular}
	\end{center}
	\label{4211}
\end{table}

For this game, one can easily check that for the strategy $a=x$ and $b=0$, the players can win the game with probability $0.75$ in classical scenario whereas in quantum scenario, the maximum quantum success probability is $0.75$. This implies that for this partition of games, there is no chance of getting a quantum advantage.\\	

\subsubsection{\it Analysis for partition $3+3+1+1$}

For this partition of games, there must be a complement input pair either they have $3$ outcomes or one input has $3$ and the other has $1$ outcome. Let us consider that the input $xy$ has $3$ outcomes. Then for the complement input $\overline{xy}$, there are two possibilities, either $\overline{xy}$ has $3$ outcomes or $\overline{xy}$ has $1$ outcome.

\textbf{Case 1:} Whenever $\overline{xy}$ has $3$ outcomes, one can get nine strategies for $xy, \overline{xy}$ pair. As each of the inputs $xy$ and $\overline{xy}$ must have a complement output pair, according to the result of theorem \ref{thm5}, these four strategies corresponding to these two complement output pairs must provide four different outputs for each of the rest two inputs. So one of these four strategies must satisfy atleast one output for atleast one of the rest of two inputs. Hence, the minimum classical success probability for all these games is $0.75$. This implies that for this partition of games, one can't achieve any advantage in quantum scenario. For example, here we consider the following game corresponding to this partition.

\begin{table}[htbp]
	\begin{center}
		\begin{tabular}{ |c|c| } 
			\hline
			Input & Corresponding output\\
			\hline
			\hline
			00 & 00, 10, 11\\
			\hline
			01 & 00\\
			\hline
			10 & 11\\
			\hline 
			11 & 00, 10, 11\\
			\hline
		\end{tabular}
	\end{center}
	\label{3311a}
\end{table}

For this game, one can easily check that for the strategy $a=0, b=0$ or $a=1, b=1$ or $a=x, b=\overline{y}$, the players can win the game with probability $0.75$ in classical scenario whereas in quantum scenario, the maximum quantum success probability is $0.75$. This implies that for this partition of games, there is no chance of getting a quantum advantage.

\textbf{Case 2:} Similarly whenever $\overline{xy}$ has $1$ valid outcome, for $xy,\overline{x} \overline{y}$ pair one can get three strategies. Among these three strategies, the two strategies that correspond to the complement output pair of $xy$ must provide $2$ different outputs for each of the rest two inputs (as mentioned in theorem \ref{thm5}). So one of these two strategies must satisfy atleast one output for the rest input having $3$ outcomes. Hence, the minimum classical success probability for this form of game is also $0.75$. This implies that for this partition of games, one can't achieve any advantage in quantum scenario. For example, here we consider the following game corresponding to this partition.

\begin{table}[htbp]
	\begin{center}
		\begin{tabular}{ |c|c| } 
			\hline
			Input & Corresponding output\\
			\hline
			\hline
			00 & 00, 01, 11\\
			\hline
			01 & 00, 10, 11\\
			\hline
			10 & 01\\
			\hline 
			11 & 11\\
			\hline
		\end{tabular}
	\end{center}
	\label{3311b}
\end{table}

For this game, one can easily check that for the strategy $a=1, b=1$ or $a=\overline{x}, b=\overline{y}$, the players can win the game with probability $0.75$ in classical scenario. 

From the discussion of subsection \ref{quant}, one can easily verify that the maximum quantum success probability for each of the inputs having $3$ outcomes is $1$ and for each of the inputs having $1$ outcome is $0.5$. So, the maximum quantum success probability for any game of this partition (for equiprobable outcomes) is  $\frac{1}{4}[1+1+0.5+0.5] = 0.75$.

It is clear from the analysis that for this partition of games, there is no chance of getting any advantage in quantum success probability as compared to the classical one.\\

\subsubsection{\it A Game for partition $2+2+2+2$ having quantum advantage}
\label{parchsh}

For this partition of games, each of the four inputs has $2$ outcomes. According to the discussion of subsections \ref{class} and \ref{quant}, one can easily check that if none of the inputs have complement output pair, the maximum quantum success probability is $0.5$. Whenever $1$ or $2$ inputs have complement output pair (like the discussions of the previous two partitions), one can easily check that there is no advantage in quantum success probability. So to achieve quantum advantage, the games must have complement output for atleast three inputs. For the games having complement output for three inputs, one can verify that although the maximum quantum success probability is greater than $0.75$, the maximum classical success probability is always $1$. A well-known game of this partition having complement output pair for all the inputs is the CHSH game which offers quantum advantage. Here we consider this game and analyze its performance in both classical and quantum scenarios.

\begin{table}[htbp]
	\begin{center}
		\begin{tabular}{ |c|c| } 
			\hline
			Input & Corresponding output\\
			\hline
			\hline
			00 & 00, 11\\
			\hline
			01 & 00, 11\\
			\hline
			10 & 00, 11\\
			\hline 
			11 & 01, 10\\
			\hline
		\end{tabular}
	\end{center}
	\label{2222}
\end{table}


From the strategies mentioned in subsection \ref{class}, one can easily check that the maximum classical success probability for this game is $0.75$ and one of the strategies to get this success probability is $a=0$ and $b=0$.

Similarly from the discussion of subsection \ref{quant}, one can easily check that the expression for quantum success probability of this game is of the form

\begin{align*}
&~ \frac{1}{4} \left[\cos^2(\theta_0 - \psi_0) + \cos^2(\theta_0 - \psi_1) + \cos^2(\theta_1 - \psi_0) + \sin^2(\theta_1 - \psi_1)\right]\\
&= \frac{1}{2} + \frac{1}{8} \left[\cos{2\alpha} + \cos{2\beta} + \cos{2\gamma} - \cos{2\delta}\right]
\end{align*}
where $\alpha=(\theta_0 - \psi_0)$, $\beta=(\theta_0 - \psi_1)$, $\gamma=(\theta_1 - \psi_0)$ and 
$\delta=(\theta_1 - \psi_1)$. 

Now the cosines can be written as an inner product between two unit vectors. 
Suppose,
\begin{align*}
&~ u_{0} = \cos{\theta_0} \vert 0 \rangle + \sin {\theta_0} \vert 1 \rangle\\
&~ u_{1} = \cos {\theta_1} \vert 0 \rangle + \sin {\theta_1} \vert 1 \rangle\\
&~ v_{0} = \cos{\psi_0} \vert 0 \rangle + \sin {\psi_0} \vert 1 \rangle\\
&~ v_{1} = \cos{\psi_1} \vert 0 \rangle+ \sin {\psi_1} \vert 1 \rangle
\end{align*}

Then one can easily check that for all $i,j$, $u_{i}v_{j} = \cos{2(\theta_{i} - \psi_{j})}$. So one can rewrite the above expression as

\begin{align*}
&~ \frac{1}{2} + \frac{1}{8} [ u_0v_0 + u_0v_1 + u_1v_0 - u_1v_1]\\
&= \frac{1}{2} + \frac{1}{8} [ u_0 (v_0 + v_1) + u_1 (v_0 - v_1)]\\
&\leq \frac{1}{2} + \frac{1}{8} ( ||v_0 + v_1|| + ||v_0 - v_1||)
\end{align*}

Let us assume, $\langle v_0, v_1 \rangle = a + ib$ and $\langle v_1, v_0 \rangle = a - ib$. Then $||v_0 + v_1|| = \sqrt{2 + 2a}$ and $||v_0 - v_1|| = \sqrt{2 - 2a}$. It is easy to check that the expression $\sqrt{2 + 2a} + \sqrt{2 - 2a}$ attains maximum value for $a = 0$ and the corresponding maximum value is $2\sqrt{2}$ i.e., $ ( ||v_0 + v_1|| + ||v_0 - v_1||) \leq 2\sqrt{2}$. So, the maximum quantum success probability for this form of games is $\left(\frac{1}{2} + \frac{2\sqrt{2}}{8}\right) = \frac{1}{2} + \frac{1}{2\sqrt{2}} \approx 0.853$.

This implies that one can find some values of $\theta_0, \theta_1, \psi_0$ and $\psi_1$ for which the game can be won with probability $0.853$ in quantum scenario. Similarly one can show the same upper bound for some other games of this group for which the maximum classical success probability is $0.75$. Therefore, the maximum separation for this partition of games is $(0.853-0.75) \approx 0.103$.\\

\subsubsection{\it A Game for partition  $3+2+2+1$ having quantum advantage}

From the theoretical analysis (similar to the analysis of partition $4+2+1+1$ and $3+3+1+1$), one can easily check that not all games for this partition can be won with probability $1$ in classical scenario and there are some games for which the maximum classical success probability is $0.75$. From the expressions of quantum success probabilities of these games, one can easily check that for some of those games, maximum quantum success probability is greater than the classical one. Here we consider one of these games and analyze its performance in both classical and quantum scenarios.

\begin{table}[htbp]
	\begin{center}
		\setlength{\tabcolsep}{0.35em}\scalebox{1}{
			\begin{tabular}{ |c|c| } 
				\hline
				Input & Corresponding output\\
				\hline
				\hline
				00 & 00, 01, 11\\
				\hline
				01 & 00, 11\\
				\hline
				10 & 01\\
				\hline 
				11 & 00, 11\\
				\hline
		\end{tabular}}
		\label{32211}
	\end{center}
\end{table}





From the strategies mentioned in subsection \ref{class}, one can easily check that the maximum classical success probability for this game is $0.75$ and one of the strategies to get this success probability is $a=0$ and $b=0$.

Similarly from the discussion of subsection \ref{quant}, one can easily check that the expression for quantum success probability of this game is of the form

\begin{align*}
&~ \frac{1}{4}\left[\frac{1}{2} + \frac{1}{2}\cos^2{\alpha} + \cos^2 {\beta} + \frac{1}{2}\sin^2 {\gamma} + \cos^2 {\delta}\right]\ \\
&= \frac{1}{2} + \frac{1}{4} [1+\cos{2\alpha}+2+2\cos{2\beta}+1-\cos{2\gamma}+2+2\cos{2\delta}]
\end{align*}
where $\alpha=(\theta_0 - \psi_0)$, $\beta=(\theta_0 - \psi_1)$, $\gamma=(\theta_1 - \psi_0)$ and 
$\delta=(\theta_1 - \psi_1)$. 

One can think of the cosines as the inner products between unit vectors. In that case, one can rewrite the above as

\begin{align*}
\frac{1}{4}[2 + \frac{1}{4}(u_{0}v_{0} + 2u_{0}v_{1} - u_{1}v_{0} + 2u_{1}v_{1})] 
\end{align*}
\begin{align*}
\le \frac{1}{4} \left[2 + \frac{1}{4}[ ||v_{0} + 2v_{1}|| +  ||-v_{0} + 2v_{1}||] \right]
\end{align*}

Let us assume, $\langle v_0, v_1 \rangle = a + ib$ and $\langle v_1, v_0 \rangle = a - ib$. Then $||v_0 + 2v_1|| = \sqrt{5 + 4a}$ and $||-v_0 + 2v_1|| = \sqrt{5 - 4a}$. It is easy to check that the expression $\sqrt{5 + 4a} + \sqrt{5 - 4a}$ attains maximum value for $a = 0$ and the corresponding maximum value is $2\sqrt{5}$ i.e., $ ( ||v_0 + 2v_1|| + ||-v_0 + 2v_1||) \leq 2\sqrt{5}$.

Hence the maximum winning probability 
$\le \frac{1}{4} \left[2 + \frac{1}{4} \times 2\sqrt{5} \right] \approx 0.78$.

This implies that one can find some values of $\theta_0, \theta_1, \psi_0$ and $\psi_1$ for which the game can be won with probability $0.78$ in quantum scenario. Similarly one can show the same upper bound for some other games of this group for which the maximum classical success probability is $0.75$.
Therefore the maximum separation for this class of games is $(0.78 - 0.75) \approx 0.03$.

From this discussion, it is clear that for all the games corresponding to partitions $4+2+1+1$ and $3+3+1+1$, there are no chances of getting quantum advantage. But for the partition $2+2+2+2$ and $3+2+2+1$, there are some games which provide quantum advantage with a separation around $0.103$ and $0.03$ respectively. A summary of these results are mentioned in Table \ref{partition8}.

\begin{table}[htb]
	\begin{center} 
			\begin{tabular}{ |c|c|c|c| } 
				\hline
				\multirow{2}{*}{Partitions} & Max. classical & Max. quantum & Corresponding\\
				~ & success prob. & success prob. & Separation\\
				\hline
				\hline
				\multirow{2}{*}{4+2+1+1} & 0.75 & 0.75 & NA\\
				~ & 1.0 & 1.0 & NA \\
				\hline
				\multirow{2}{*}{3+3+1+1} & 0.75 & 0.75 & NA\\
				~ & 1.0 & 1.0 & NA \\
				\hline
				\multirow{2}{*}{2+2+2+2} & 0.75 & 0.853 & 0.103\\
				~ & 1.0 & 1.0 & NA \\
				\hline 
				\multirow{2}{*}{3+2+2+1} & 0.75 & 0.78 & 0.03\\
				~ & 1.0 & 1.0 & NA \\
				\hline
			\end{tabular}
	\end{center}
	\caption{Analysis of partitions for $8$ successful outcomes}
	\label{partition8}
\end{table}

\subsection{Games Corresponding to $9$ Successful Outcomes}
Proceeding to the similar way as the analysis of the $8$ successful outcomes, the maximum classical and quantum success probabilities that one can achieve for each of the partitions of the $9$ successful outcomes are mentioned in the Table \ref{partition9}. From these results, one can easily check that quantum advantage can be achieved (with a separation around $0.042$) only for some of the games corresponding to the partition $3+3+2+1$. For simplicity, here we only consider a game (having quantum advantage) from the partition $3+3+2+1$ and analyze the performance.

\begin{table}[htb]
	\begin{center}
		\setlength{\tabcolsep}{0.35em}\scalebox{1}{
			\begin{tabular}{ |c|c|c|c| } 
				\hline
				\multirow{2}{*}{Partitions} & Max. classical & Max. quantum & Corresponding\\
				~ & success prob. & success prob. & Separation\\
				\hline
				\hline
				\multirow{2}{*}{4+3+1+1} & 0.75 & 0.75 & NA\\
				~ & 1.0 & 1.0 & NA \\
				\hline
				4+2+2+1 & 1.0 & 1.0 & NA\\
				\hline
				3+2+2+2 & 1.0 & 1.0 & NA\\
				\hline 
				\multirow{2}{*}{3+3+2+1} & 0.75 & 0.792 & 0.042\\
				~ & 1.0 & 1.0 & NA \\
				\hline
		\end{tabular}}
		\end{center}
		\caption{Analysis of partitions for $9$ successful outcomes} 
		\label{partition9}
\end{table}

\subsubsection{\it A Game for partition $3+3+2+1$ having quantum advantage}

From the results of table \ref{partition9}, it is clear that for the games having $9$ successful outcomes, quantum advantage can be achieved only for some of the games having partition $3+3+2+1$.
Here we consider the following game which can't be won with certainty in classical scenario.

\begin{table}[htb]
	\begin{center}
		\setlength{\tabcolsep}{0.35em}\scalebox{1}{
			\begin{tabular}{ |c|c| } 
				\hline
				Input & Corresponding output\\
				\hline
				\hline
				00 & 00, 11\\
				\hline
				01 & 00, 01, 10\\
				\hline
				10 & 11\\
				\hline 
				11 & 00, 01, 11\\
				\hline
		\end{tabular}}
	\end{center}
\end{table}


From the strategies mentioned in subsection \ref{class}, one can easily check that the maximum classical success probability for this game is $0.75$ and one of the strategies to get this success probability is $a=0$ and $b=0$.

Similarly from the discussion of subsection \ref{quant}, one can easily check that the expression for quantum success probability of this mentioned game is of the form

\begin{align*}
~ &~ \frac{1}{4}[\cos^2{\alpha} + \frac{1}{2} + \frac{1}{2} \sin^2 {\beta} + \frac{1}{2} + \frac{1}{2}\cos^2 {\gamma} + \frac{1}{2} \cos^2 {\delta}]\\
~ &= \frac{1}{4}[2 + \frac{1}{4} + \frac{1}{4}(2\cos{2\alpha} - \cos{2\beta} + \cos{2\gamma} + \cos{2\delta})]
\end{align*}
where $\alpha=(\theta_0 - \psi_0)$, $\beta=(\theta_0 - \psi_1)$, $\gamma=(\theta_1 - \psi_0)$ and 
$\delta=(\theta_1 - \psi_1)$. 

One can think of the cosines as the inner products between unit vectors. In that case, one can rewrite the above as

\begin{align*}
\frac{1}{4}[2 + \frac{1}{4} + \frac{1}{4}(2u_{0}v_{0} - u_{0}v_{1} + u_{1}v_{0} + u_{1}v_{1})]
\end{align*}
\begin{align*}
\le \frac{1}{4} \left[2 + \frac{1}{4} + \frac{1}{4}(||u_{0}|| ||2v_{0} - v_{1}|| + ||u_{1}|| ||v_{0} + v_{1}||) \right]
\end{align*}

Let us assume that $\langle v_0, v_1 \rangle = (a + ib)$ and $\langle v_1, v_0 \rangle = (a - ib)$. Then one can easily check that $||2v_0 - v_1|| = \sqrt{5 - 4a}$ and $||v_0 + v_1|| = \sqrt{2 + 2a}$. From these expressions, one can easily calculate that the expression $\sqrt{2 + 2a} + \sqrt{5 - 4a}$ attains maximum value for $a = -\frac{1}{4}$ and the corresponding maximum value is $\sqrt{6}+\sqrt{1.5}$. From this, the maximum winning probability in quantum scenario can be written as,
\begin{align*}
&~ \frac{1}{4} \left[2 + \frac{1}{4} + \frac{1}{4}(||u_{0}|| ||2v_{0} - v_{1}|| + ||u_{1}|| ||v_{0} + v_{1}||) \right]\\
&\leq \frac{1}{4} \left[\frac{9}{4} + \frac{1}{4} \times (\sqrt{6} + \sqrt{1.5})\right]\\
&\approx 0.792
\end{align*}

This implies that one can find some values of $\theta_0, \theta_1, \psi_0$ and $\psi_1$ for which the game can be won with a probability of $0.792$ in the quantum scenario. Similarly one can show the same upper bound for some other games of this group for which the maximum classical success probability is $0.75$.
Therefore the maximum separation for this class of games is $(0.792 - 0.75) \approx 0.042$.

From this discussion, it is clear that for all the games corresponding to partitions $4+3+1+1$, $4+2+2+1$ and $3+2+2+2$, there are no chances of getting quantum advantage. But for the partition $3+3+2+1$, there are some games which provide quantum advantage with a separation around $0.042$.


\subsection{Games Corresponding to $10$ Successful Outcomes}

Proceeding in similar way as the analysis of the $8$ successful outcomes, the maximum classical and quantum success probabilities that one can achieve for each of the partitions of the $10$ successful outcomes are mentioned in Table \ref{partition10}. From this result, one can easily check that the quantum advantage can be achieved only for some of the games corresponding to partition $3+3+3+1$ with a separation around $0.05$. For simplicity, here we consider one of these games having quantum advantage and analyze its performance in both classical and quantum scenarios.

\begin{table}[htb]
	\begin{center}
		\setlength{\tabcolsep}{0.35em}\scalebox{1}{
			\begin{tabular}{ |c|c|c|c| } 
				\hline
				\multirow{2}{*}{Partitions} & Max. classical & Max. quantum & Corresponding\\
				~ & success prob. & success prob. & Separation\\
				\hline
				\hline
				4+4+1+1 & 1.0 & 1.0 & NA\\
				\hline
				4+2+2+2 & 1.0 & 1.0 & NA\\
				\hline
				3+3+2+2 & 1.0 & 1.0 & NA\\
				\hline
				4+3+2+1 & 1.0 & 1.0 & NA\\
				\hline 
				\multirow{2}{*}{3+3+3+1} & 0.75 & 0.8 & 0.05\\
				~ & 1.0 & 1.0 & NA \\
				\hline
		\end{tabular}}
	\end{center}
	\caption{Analysis of partitions for $10$ successful outcomes} 
	\label{partition10}
\end{table}

\subsubsection{\it A game for partition $3+3+3+1$ having quantum advantage}

From the results of table \ref{partition10}, it is clear that for the games having $10$ successful outcomes, quantum advantage can be achieved only for some of the games having partition $3+3+3+1$.
Here we consider the following game which can't be won with certainty in classical scenario.

\begin{table}[htb]
	\begin{center}
		\setlength{\tabcolsep}{0.35em}\scalebox{1}{
			\begin{tabular}{ |c|c| } 
				\hline
				Input & Corresponding output\\
				\hline
				\hline
				00 & 00\\
				\hline
				01 & 00, 10, 11\\
				\hline
				10 & 00, 01, 11\\
				\hline 
				11 & 01, 10, 11\\
				\hline
		\end{tabular}}
	\end{center}
\end{table}


From the strategies mentioned in subsection \ref{class}, one can easily check that the maximum classical success probability for this game is $0.75$ and one of the strategies to get this success probability is $a=0$ and $b=0$.

Similarly from the discussion of subsection \ref{quant}, one can easily check that the expression for quantum success probability of this mentioned game is of the form

\begin{align*}
~ &~ \frac{1}{4}\left[\frac{1}{2}\cos^2{\alpha} + \frac{1}{2} + \frac{1}{2} \cos^2 {\beta} + \frac{1}{2} + \frac{1}{2}\cos^2 {\gamma} + \frac{1}{2} + \frac{1}{2} \sin^2 {\delta}\right]\\
~ &= \frac{1}{4}\left[\frac{5}{2} + \frac{1}{4}(\cos{2\alpha} + \cos{2\beta} + \cos{2\gamma} - \cos{2\delta})\right]
\end{align*}
where $\alpha=(\theta_0 - \psi_0)$, $\beta=(\theta_0 - \psi_1)$, $\gamma=(\theta_1 - \psi_0)$ and 
$\delta=(\theta_1 - \psi_1)$. 

One can easily think of the cosines as the inner products between unit vectors. In that case, one can rewrite the above expression as

\begin{align*}
\frac{1}{4}\left[\frac{5}{2} + \frac{1}{4}(u_{0}v_{0} + u_{0}v_{1} + u_{1}v_{0} - u_{1}v_{1})\right]
\end{align*}
\begin{align*}
\le \frac{1}{4} \left[\frac{5}{2} + \frac{1}{4}\left(||u_{0}|| ||v_{0} + v_{1}|| + ||u_{1}|| ||v_{0} - v_{1}||\right) \right]
\end{align*}
Now, $||u_{0}|| ||v_{0} + v_{1}|| + ||u_{1}|| ||v_{0} - v_{1}|| \le ||v_{0} + v_{1}|| + ||v_{0} - v_{1}|| \le 2\sqrt{2}$.

Hence the winning probability in quantum scenario
$\le \frac{1}{4} \left[\frac{5}{2} + \frac{1}{4} \times 2\sqrt{2} \right] \approx 0.80$.

This implies that one can find some values of $\theta_0, \theta_1, \psi_0$ and $\psi_1$ for which the game can be won with a probability of $0.80$ in the quantum scenario. Similarly one can show the same upper bound for some other games of this group for which the maximum classical success probability is $0.75$.
Therefore the maximum separation for this class of games is $(0.80 - 0.75) \approx 0.05$.

From this discussion, it is clear that for all the games corresponding to partitions $4+4+1+1$, $4+2+2+2$, $3+2+2+2$ and $4+3+2+1$, there are no chances of getting quantum advantage. But for the partition $3+3+3+1$, there are some games which provide quantum advantage with a separation around $0.05$.


\subsection{Games Corresponding to $11$ Successful Outcomes}

Proceeding in similar way as the analysis of the $8$ successful outcomes, the maximum classical and quantum success probabilities that one can achieve for each of the partitions of the $11$ successful outcomes are mentioned in the Table \ref{partition11}. From this result, one can easily check that there are no games corresponding to $11$ successful outcomes for which quantum advantage can be achieved.

\begin{table}[htb]
	\begin{center}
		\setlength{\tabcolsep}{0.35em}\scalebox{1}{
			\begin{tabular}{ |c|c|c|c| } 
				\hline
				\multirow{2}{*}{Partitions} & Max. classical & Max. quantum & Corresponding\\
				~ & success prob. & success prob. & Separation\\
				\hline
				\hline
				4+4+2+1 & 1.0 & 1.0 & NA\\
				\hline
				4+3+3+1 & 1.0 & 1.0 & NA\\
				\hline
				4+3+2+2 & 1.0 & 1.0 & NA\\
				\hline 
				3+3+3+2 & 1.0 & 1.0 & NA\\
				\hline
		\end{tabular}}
	\end{center}
	\caption{Analysis of partitions for $11$ successful outcomes} 
	\label{partition11}
\end{table}

So for all the games corresponding to $11$ successful outcomes, there are no chances of getting any advantage in quantum success probability as compared to the classical one.

\subsection{Games Corresponding to $12$ or More Successful Outcomes}

One can easily verify that each of the partitions for $12$ successful outcomes is an extension of some partitions corresponding to $11$ successful outcomes. As all the games corresponding to $11$ successful outcomes can be won classically with certainty, there is no chance of getting quantum advantage for any of the games having $12$ successful outcomes. Similarly one can also argue the same statement for $13$ or more successful outcomes.

For this reason, the games having $12$ or more successful outcomes can't achieve quantum advantage.

\subsection{Games Corresponding to $7$ Successful Outcomes}

Proceeding in similar way as the analysis of the $8$ successful outcomes, the maximum classical and quantum success probabilities that one can achieve for each of the partitions of the $7$ successful outcomes are mentioned in the Table \ref{partition7}. From these results, one can easily check that quantum advantage can be achieved only for some of the games corresponding to partition $2+2+2+1$ with a separation around $0.012$. For simplicity, here we consider one of these games having quantum advantage and analyze its performance.\\

\begin{table}[htb]
	\begin{center}
		\setlength{\tabcolsep}{0.35em}\scalebox{1}{
			\begin{tabular}{ |c|c|c|c| } 
				\hline
				\multirow{2}{*}{Partitions} & Max. classical & Max. quantum & Corresponding\\
				~ & success prob. & success prob. & Separation\\
				\hline
				\hline
				\multirow{2}{*}{3+2+1+1} & 0.75 & 0.75 & NA\\
				~ & 1.0 & 1.0 & NA \\
				\hline
				\multirow{2}{*}{2+2+2+1} & 0.75 & 0.762 & 0.012\\
				~ & 1.0 & 1.0 & NA \\
				\hline
		\end{tabular}}
	\end{center}
	\caption{Analysis of partitions for $7$ successful outcomes} 
	\label{partition7}
\end{table}

\subsubsection{\it A game for partition $2+2+2+1$ having quantum advantage}



From the results of table \ref{partition7}, it is clear that for the games having $7$ successful outcomes, quantum advantage can be achieved only for some of the games having partition $2+2+2+1$.
Here we consider the following game which can't be won with certainty in the classical scenario.

\begin{table}[htb]
	\begin{center}
		\setlength{\tabcolsep}{0.35em}\scalebox{1}{
			\begin{tabular}{ |c|c| } 
				\hline
				Input & Corresponding output\\
				\hline
				\hline
				00 & 00, 11\\
				\hline
				01 & 01, 10\\
				\hline
				10 & 11\\
				\hline 
				11 & 00, 11\\
				\hline
		\end{tabular}}
	\end{center}
\end{table}

From the strategies mentioned in subsection \ref{class}, one can easily check that the maximum classical success probability for this game is $0.75$ and one of the strategies to get this success probability is $a=1$ and $b=1$.

Similarly from the discussion of subsection \ref{quant}, one can easily check that the expression for quantum success probability of this mentioned game is of the form



\begin{align*}
~ &~ \frac{1}{4}\left[\cos^2{\alpha} + \sin^2 {\beta} + \frac{1}{2}\cos^2 {\gamma} + \cos^2 {\delta}\right]\\
~ &= \frac{1}{4}\left[\frac{7}{4} + \frac{1}{4}(2\cos{2\alpha} - 2\cos{2\beta} + \cos{2\gamma} + 2\cos{2\delta})\right]
\end{align*}
where $\alpha=(\theta_0 - \psi_0)$, $\beta=(\theta_0 - \psi_1)$, $\gamma=(\theta_1 - \psi_0)$ and 
$\delta=(\theta_1 - \psi_1)$. 

One can easily think of the cosines as the inner products between unit vectors. In that case, one can rewrite the above expression as
\begin{align*}
\frac{1}{4}\left[\frac{7}{4} + \frac{1}{4}(2u_{0}v_{0} - 2u_{0}v_{1} + u_{1}v_{0} + 2u_{1}v_{1})\right]
\end{align*}
\begin{align*}
\le \frac{1}{4} \left[\frac{7}{4} + \frac{1}{4}\left(||u_{0}|| ||2v_{0} - 2v_{1}|| + ||u_{1}|| ||v_{0} + 2v_{1}||\right) \right]
\end{align*}

Let us assume that $\langle v_0, v_1 \rangle = (a + ib)$ and $\langle v_1, v_0 \rangle = (a - ib)$. Then one can easily check that $||2v_0 - 2v_1|| = 2||v_0 - v_1|| = 2\sqrt{2 - 2a}$ and $||v_0 + 2v_1|| = \sqrt{5 + 4a}$. From these expressions, one can easily calculate that the expression $2\sqrt{2 - 2a} + \sqrt{5 + 4a}$ attains maximum value for $a = -\frac{1}{2}$ and the corresponding maximum value is $3\sqrt{3}$. From this, the maximum winning probability in quantum scenario can be written as,

\begin{align*}
&~ \frac{1}{4} \left[\frac{7}{4} + \frac{1}{4}\left(||u_{0}|| ||2v_{0} - 2v_{1}|| + ||u_{1}|| ||v_{0} + 2v_{1}||\right) \right]\\
&\leq \frac{1}{4} \left[\frac{7}{4} + \frac{1}{4} \times 3\sqrt{3}\right]\\
&\approx 0.762
\end{align*}

This implies that one can find some values of $\theta_0, \theta_1, \psi_0$ and $\psi_1$ for which the game can be won with a probability of $0.762$ in the quantum scenario. Similarly one can explore that the same upper bound can be achieved for all the other games of this group for which quantum advantage can be achieved and the maximum classical success probability is $0.75$.
Therefore the maximum separation for this class of games is $(0.762 - 0.75) \approx 0.012$.


From this discussion, it is clear that for all the games corresponding to partition $3+2+1+1$, there are no chances of getting quantum advantage. But for the partition $2+2+2+1$, there are some games which provide quantum advantage with a separation around $0.012$.

\subsection{Games Corresponding to $6$ Successful Outcomes}

Proceeding in similar way as the analysis of the $8$ successful outcomes, the maximum classical and quantum success probabilities that one can achieve for each of the partitions of the $6$ successful outcomes are mentioned in the Table \ref{partition6}. From these results, one can easily check that quantum advantage can be achieved only for some of the games corresponding to partition $3+1+1+1$ with a separation around $0.05$. Here we consider one of these games having quantum advantage and analyze its performance in both classical and quantum scenarios.

\begin{table}[htb]
	\begin{center}
		\setlength{\tabcolsep}{0.35em}\scalebox{1}{
			\begin{tabular}{ |c|c|c|c| } 
				\hline
				\multirow{2}{*}{Partitions} & Max. classical & Max. quantum & Corresponding\\
				~ & success prob. & success prob. & Separation\\
				\hline
				\hline
				\multirow{3}{*}{3+1+1+1} & 0.5 & 0.55 & 0.05\\
				~ & 0.75 & 0.75 & NA\\
				~ & 1.0 & 1.0 & NA \\
				\hline
				\multirow{2}{*}{2+2+1+1} & 0.75 & 0.75 & NA\\
				~ & 1.0 & 1.0 & NA \\
				\hline
		\end{tabular}}
	\end{center}
	\caption{Analysis of partitions for $6$ successful outcomes} 
	\label{partition6}
\end{table}

\subsubsection{\it A game for partition $3+1+1+1$ having quantum advantage}


From the results of table \ref{partition6}, it is clear that for the games having $6$ successful outcomes, quantum advantage can be achieved only for some of the games having partition $3+1+1+1$.
Here we consider the following game which can't be won with certainty in the classical scenario.

\begin{table}[htb]
	\begin{center}
		\setlength{\tabcolsep}{0.35em}\scalebox{1}{
			\begin{tabular}{ |c|c| } 
				\hline
				Input & Corresponding output\\
				\hline
				\hline
				00 & 00, 01, 10\\
				\hline
				01 & 11\\
				\hline
				10 & 01\\
				\hline 
				11 & 10\\
				\hline
		\end{tabular}}
	\end{center}
\end{table}

From the strategies mentioned in subsection \ref{class}, one can easily check that the maximum classical success probability for this game is $0.5$ and one of the strategies to get this success probability is $a=0$ and $b=1$.

Similarly from the discussion of subsection \ref{quant}, one can easily check that the expression for quantum success probability of this mentioned game is of the form
\begin{align*}
~ &~ \frac{1}{4}\left[\frac{1}{2} + \frac{1}{2}\sin^2{\alpha} + \frac{1}{2} \cos^2 {\beta} + \frac{1}{2}\sin^2 {\gamma} + \frac{1}{2} \sin^2 {\delta}\right]\\
~ &= \frac{1}{4}\left[\frac{3}{2} + \frac{1}{4}(-\cos{2\alpha} + \cos{2\beta} - \cos{2\gamma} - \cos{2\delta})\right]
\end{align*}
where $\alpha=(\theta_0 - \psi_0)$, $\beta=(\theta_0 - \psi_1)$, $\gamma=(\theta_1 - \psi_0)$ and 
$\delta=(\theta_1 - \psi_1)$. 

One can easily think of the cosines as the inner products between two unit vectors. In that case, one can rewrite the above expression as
\begin{align*}
\frac{1}{4}\left[\frac{3}{2} + \frac{1}{4}(-u_{0}v_{0} + u_{0}v_{1} - u_{1}v_{0} - u_{1}v_{1})\right]
\end{align*}
\begin{align*}
\le \frac{1}{4} \left[\frac{3}{2} + \frac{1}{4}\left(||u_{0}|| ||v_{0} - v_{1}|| + ||u_{1}|| ||v_{0} + v_{1}||\right) \right]
\end{align*}
Now, $||u_{0}|| ||v_{0} - v_{1}|| + ||u_{1}|| ||v_{0} + v_{1}|| \le ||v_{0} - v_{1}|| + ||v_{0} + v_{1}|| \le 2\sqrt{2}$.

Hence the winning probability 
$\le \frac{1}{4} \left[\frac{3}{2} + \frac{1}{4} \times 2\sqrt{2} \right] \approx 0.55$.

This implies that one can find some values of $\theta_0, \theta_1, \psi_0$ and $\psi_1$ for which the game can be won with a probability of $0.55$ in the quantum scenario. Similarly one can explore that the same upper bound can be achieved for all the other games of this group for which quantum advantage can be achieved and the maximum classical success probability is $0.5$.
Therefore the maximum separation for this class of games is $(0.55 - 0.5) \approx 0.05$.

From this discussion, it is clear that for all the games corresponding to partition $2+2+1+1$, there are no chances of getting quantum advantage. But for the partition $3+1+1+1$, there are some games which provide quantum advantage with a separation around $0.05$. 

\subsection{Games Corresponding to $5$ Successful Outcomes}

Proceeding in similar way as the analysis of the $8$ successful outcomes, the maximum classical and quantum success probabilities that one can achieve for each of the partitions of the $5$ successful outcomes are mentioned in the Table \ref{partition5}. From these results, one can easily check that quantum advantage can be achieved for the games corresponding to partition $2+1+1+1$ with a separation around $0.042$. Here we consider one of these games having quantum advantage and analyze its performance in both classical and quantum scenarios.\\

\begin{table}[htb]
	\begin{center}
		\setlength{\tabcolsep}{0.35em}\scalebox{1}{
			\begin{tabular}{ |c|c|c|c| } 
				\hline
				\multirow{2}{*}{Partitions} & Max. classical & Max. quantum & Corresponding\\
				~ & success prob. & success prob. & Separation\\
				\hline
				\hline
				\multirow{3}{*}{2+1+1+1} & 0.5 & 0.542 & 0.042\\
				~ & 0.75 & 0.75 & NA\\
				~ & 1.0 & 1.0 & NA \\
				\hline
		\end{tabular}}
	\end{center}
	\caption{Analysis of partitions for $5$ successful outcomes} 
	\label{partition5}
\end{table}

\subsubsection{\it A game for partition $2+1+1+1$ having quantum advantage}

From the results of table \ref{partition5}, it is clear that for the games having $5$ successful outcomes, quantum advantage can be achieved only for some of the games having partition $2+1+1+1$.
Here we consider the following game which can't be won with certainty in the classical scenario.


\begin{table}[htb]
	\begin{center}
		\setlength{\tabcolsep}{0.35em}\scalebox{1}{
			\begin{tabular}{ |c|c| } 
				\hline
				Input & Corresponding output\\
				\hline
				\hline
				00 & 01, 10\\
				\hline
				01 & 11\\
				\hline
				10 & 01\\
				\hline 
				11 & 10\\
				\hline
		\end{tabular}}
	\end{center}
\end{table}

From the strategies mentioned in subsection \ref{class}, one can easily check that the maximum classical success probability for this game is $0.5$ and one of the strategies to get this success probability is $a=0$ and $b=1$.

Similarly from the discussion of subsection \ref{quant}, one can easily check that the expression for quantum success probability of this mentioned game is of the form

\begin{align*}
~ &~ \frac{1}{4}\left[\sin^2{\alpha} + \frac{1}{2} \cos^2 {\beta} + \frac{1}{2}\sin^2 {\gamma} + \frac{1}{2} \sin^2 {\delta}\right]\\
~ &= \frac{1}{4}\left[\frac{5}{4} + \frac{1}{4}(-2\cos{2\alpha} + \cos{2\beta} - \cos{2\gamma} - \cos{2\delta})\right]
\end{align*}
where $\alpha=(\theta_0 - \psi_0)$, $\beta=(\theta_0 - \psi_1)$, $\gamma=(\theta_1 - \psi_0)$ and 
$\delta=(\theta_1 - \psi_1)$. \\ 
As one can think of the cosines as the inner products between unit vectors, the above expression can be rewritten as,
\begin{align*}
\frac{1}{4}\left[\frac{5}{4} + \frac{1}{4}(-2u_{0}v_{0} + u_{0}v_{1} - u_{1}v_{0} - u_{1}v_{1})\right]
\end{align*}
\begin{align*}
\le \frac{1}{4} \left[\frac{5}{4} + \frac{1}{4}\left(||-2v_{0} + v_{1}|| +  ||v_{0} + v_{1}||\right)\right]
\end{align*}

Let us assume that $\langle v_0, v_1 \rangle = (a + ib)$ and $\langle v_1, v_0 \rangle = (a - ib)$. Then one can easily check that $||-2v_0 + v_1|| = \sqrt{5 - 4a}$ and $||v_0 + v_1|| = \sqrt{2 + 2a}$. From these expressions, one can easily calculate that the expression $\sqrt{2 + 2a} + \sqrt{5 - 4a}$ attains maximum value for $a = -\frac{1}{4}$ and the corresponding maximum value is $\sqrt{6}+\sqrt{1.5}$. From this, the maximum winning probability in quantum scenario can be written as,
\begin{align*}
&~ \frac{1}{4} \left[\frac{5}{4} + \frac{1}{4}\left(||-2v_{0} + v_{1}|| + ||v_{0} + v_{1}||\right)\right]\\
&\leq \frac{1}{4} \left[\frac{5}{4} + \frac{1}{4} \times (\sqrt{6} + \sqrt{1.5})\right]\\
&\approx 0.542
\end{align*}

This implies that one can find some values of $\theta_0, \theta_1, \psi_0$ and $\psi_1$ for which the game can be won with a probability of $0.542$ in the quantum scenario. Similarly one can explore that the same upper bound can be achieved for all the other games of this group for which quantum advantage can be achieved and the maximum classical success probability is $0.5$.
Therefore the maximum separation for this class of games is $(0.542 - 0.5) \approx 0.042$.


From this discussion, it is clear that for all the games corresponding to partition $2+1+1+1$, there are some games which provide quantum advantage with a separation around $0.042$. 

\subsection{Games Corresponding to $4$ Successful Outcomes}

Proceeding in similar way as the analysis of the $8$ successful outcomes, the maximum classical and quantum success probabilities that one can achieve for the partition of the $4$ successful outcomes are mentioned in the Table \ref{partition4}. From these results, one can easily check that there are no games corresponding to $4$ successful outcomes for which quantum advantage can be achieved.

\begin{table}[htb]
	\begin{center}
		\setlength{\tabcolsep}{0.35em}\scalebox{1}{
			\begin{tabular}{ |c|c|c|c| } 
				\hline
				\multirow{2}{*}{Partitions} & Max. classical & Max. quantum & Corresponding\\
				~ & success prob. & success prob. & Separation\\
				\hline
				\hline
				\multirow{3}{*}{1+1+1+1} & 0.5 & 0.5 & NA\\
				~ & 0.75 & 0.75 & NA \\
				~ & 1.0 & 1.0 & NA \\
				\hline
		\end{tabular}}
	\end{center}
	\caption{Analysis of partition for $4$ successful outcomes} 
	\label{partition4}
\end{table}

So for all the games corresponding to $4$ successful outcomes, there are no chances of getting any advantage in quantum success probability as compared to the classical one.

\subsection{Games Corresponding to $3$ or Less Successful Outcomes}

One cannot divide 3 or less number of successful  outcomes into four parts such that each part has atleast one outcome. Hence for these class of games, one can easily argue from the discussion in subsection \ref{quant} that the maximum quantum success probability is always less than $0.5$ and there is no chance of getting quantum advantage.

\begin{center}
\begin{table}[htbp]
		\setlength{\tabcolsep}{0.005em}\scalebox{1}{
			\begin{tabular}{ |c|c|c|c|c|c|c|c| } 
				\hline
				No. of & Partition & A game corr. & Max. & Corr. & Max. & Corr. & Max.\\
				succ. & with quant. & to the partition & classical & classical & quantum & quantum & Sepa-\\
				outcome & advantage & (in ANF form) & succ. & strategy & succ. &  strategy & ration\\
				~ & ~ & ~ &  prob. & ~ & prob. & ~ & ~\\
				\hline
				10 & 3+3+3+1 & $a \oplus b \oplus xy \oplus xb \oplus ya$ & 0.75 & a=0,& 0.80 & $\theta_0 = 0, \theta_1 = \frac{\pi}{4}$ & 0.05\\
				~ & ~ & $\oplus ab \oplus xya \oplus xyb$ & ~ & b=0 & ~ & $\psi_0 = \frac{\pi}{8}, \psi_1 = \frac{7\pi}{8}$ & ~ \\
				\hline
				~ & ~ & $x \oplus a \oplus b \oplus xy \oplus xa$ & ~ & a=0, & ~ & $\theta_0 = 0,\theta_1 = \frac{21\pi}{100}$ & ~\\
				9 & 3+3+2+1 & $\oplus xb \oplus ya \oplus yb \oplus xyb$ & 0.75 & b=0 & 0.792 & $\psi_0 = \frac{\pi}{8},\psi_1 = \frac{7\pi}{20}$ & 0.042\\
				~ & ~ & $\oplus xab \oplus yab \oplus xyab$ & ~ & ~ & ~ & ~ & ~\\
				\hline
				8 & 2+2+2+2 & $a \oplus b \oplus xy$ & 0.75 & a=0, & 0.853 & $\theta_0 = 0,\theta_1 = \frac{\pi}{4}$ & 0.103\\
				~ & ~ & ~ & ~ & b=0 & ~ & $\psi_0 = \frac{\pi}{8},\psi_1 = \frac{7\pi}{8}$ & ~\\
				\hline 
				~ & ~ & $x \oplus a \oplus xy \oplus xa$ & ~ &  a=0, & ~ & $\theta_0 = 0,\theta_1 = \frac{7\pi}{50}$ & ~\\
				8 & 3+2+2+1 & $\oplus xb \oplus yb \oplus ab$ & 0.75 & b=0 & 0.78 & $\psi_0 = \frac{5\pi}{6},\psi_1 = \frac{7\pi}{100}$ & 0.03\\
				~ & ~ & $\oplus xya \oplus xyb \oplus yab$ & ~ & ~ & ~ & ~ & ~\\
				\hline
				~ & ~ & $x \oplus y \oplus a \oplus b$ & ~ & a=1, & ~ & $\theta_0 = 0,\theta_1 = \frac{\pi}{3}$ & ~\\
				7 & 2+2+2+1 & $\oplus xa \oplus xb \oplus xya$ & 0.75 & b=1 & 0.762 & $\psi_0 = \frac{2\pi}{25},\psi_1 = \frac{21\pi}{50}$ & 0.012\\
				~ & ~ & $\oplus xyb \oplus xab \oplus xyab$ & ~ & ~ & ~ & ~ & ~\\
				\hline
				6 & 3+1+1+1 & $x \oplus y \oplus xy \oplus xb$ & 0.5 & a=0, & 0.55 & $\theta_0 = 0,\theta_1 = \frac{\pi}{4}$ & 0.05\\
				~ & ~ & $\oplus ab \oplus xya \oplus xyb$ & ~ & b=1 & ~ & $\psi_0 = \frac{5\pi}{8},\psi_1 = \frac{7\pi}{8}$ & ~\\
				\hline
				~ & ~ & $1 \oplus a \oplus b \oplus xa$ & ~ & a=0, & ~ & $\theta_0 = 0,\theta_1 = \frac{21\pi}{100}$ & ~\\
				5 & 2+1+1+1 & $\oplus ya \oplus yb \oplus xab$ & 0.5 & b=1 & 0.542 & $\psi_0 = \frac{14\pi}{25},\psi_1 = \frac{17\pi}{20}$ & 0.042\\
				~ & ~ & $\oplus yab \oplus xyab$ & ~ & ~ & ~ & ~ & ~\\
				\hline
		\end{tabular}}
	\caption{List of partitions and the corresponding nonlocal games (in ANF form) which offer quantum advantage} 
	\label{separation}
\end{table}
\end{center}


\section{Conclusion}
\label{concl}
 In our analysis, we found only seven partitions (over all possible games having at least one successful outcome for each possible input) such that the games corresponding to those partitions offer a quantum advantage. The maximum classical and the maximum quantum success probabilities for the games corresponding to each of those partitions are mentioned in Table \ref{separation}. We also mention an example of such a game (in Algebraic Normal Form) for each of those partitions. It is well known that the CHSH game is used to certify untrusted devices in the device-independent scenario. It is also known that the required sample size for device-independent testing is inversely proportional to the success probability of the corresponding nonlocal game. Although the maximum success probability for the CHSH game using quantum resources is less than $1$ (around $0.85$), so far no other two-party nonlocal game is used for device-independent testing. To the best of our knowledge, it was also unknown whether there exists any other binary input binary output two-party nonlocal game which offers a quantum advantage. To answer all these questions, in this article, we explore the performance of all possible binary input binary output two-party nonlocal games in terms of partitions of the total number of successful outcomes to check whether there exist any such games which offer a quantum advantage with maximum quantum success probability greater than $0.85$. From our analysis, we found that there are some binary input binary output two-party nonlocal games (other than the CHSH game) that offer quantum advantage but the CHSH game has the maximum quantum success probability (also with a maximum separation of around $0.1$) among all these games. Further study for three (or more) party nonlocal games will be an interesting research work in this direction.

\newpage

\end{document}